\newtheorem{theorem}{\textbf{Theorem}}
\newtheorem{lemma}{\textbf{Lemma}}
\newtheorem{example}{\textbf{Example}}
\newtheorem{corollary}{\textbf{Corollary}}
\newtheorem{remark}{\textbf{Remark}}
\newtheorem{definition}{\textbf{Definition}}
\newtheorem{problem}{\textbf{Problem}}
\newtheorem{proposition}{\textbf{Proposition}}
\newenvironment{proof}{{\noindent{\bf \emph{Proof:}}}\quad}{\hfill $\square$\par}
\begin{document}
%
%
%
%

\title{Generic Detectability and Isolability of Topology Failures in Networked Linear Systems} 

\author{Yuan Zhang, Yuanqing Xia$^*$, Jinhui Zhang, and Jun Shang
\thanks{ This work was supported in part by the  China Postdoctoral Innovative Talent Support Program (BX20200055), the National Natural Science Foundation of China (62003042), and the State Key Program of National Natural Science Foundation of China (61836001).

Yuan Zhang, Yuanqing Xia (corresponding author) and Jinhui Zhang are with the School of Automation, Beijing Institute of Technology, Beijing, China {(email: {\tt\small zhangyuan14@bit.edu.cn, xia\_yuanqing@bit.edu.cn, zhangjinh@bit.edu.cn}).} 

Jun Shang is with the Department of Electrical and Computer Engineering, University of Alberta, Edmonton, Canada T6G 1H9 (email:{\tt\small  jshang2@ualberta.ca}).
       } }
\pagestyle{empty} 
\maketitle
\thispagestyle{empty} 
\begin{abstract}This paper studies the possibility of detecting and isolating topology failures (including link failures and node failures) of a networked system from subsystem measurements, in which subsystems are of fixed high-order linear dynamics, and the exact interaction weights among them are unknown. We prove that in such class of networked systems with the same network topologies, the detectability and isolability of a given topology failure (set) are generic properties, indicating that it is the network topology that dominates the property of being detectable or isolable for a failure (set). We first give algebraic conditions for detectability and isolability of arbitrary parameter perturbations for a lumped plant, and then derive graph-theoretical necessary and sufficient conditions for generic detectability and isolability of topology failures for the networked systems. On the basis of these results, we consider the problems of deploying the smallest set of sensors for generic detectability and isolability. We reduce the associated sensor placement problems to the hitting set problems, which can be effectively solved by greedy algorithms with guaranteed approximation performances.
\end{abstract}
\begin{IEEEkeywords}
Failure detectability and isolability, generic property, graph theory, sensor placement, networked system
\end{IEEEkeywords}
\section{Introduction}
There exist many large-scale systems consisting of a large number of subsystems in the real world. These subsystems, usually geographically distributed, are interconnected through a network. Such systems are often called networked systems. Many critical infrastructures can be modeled as networked systems, such as power systems \cite{kundur1994power}, the Internet \cite{Albert2000Error}, wireless communication networks \cite{wood2002denial}, and transportation networks \cite{Jadbabaie2003Coordination}. The security and reliability of networked systems have aroused great concern from various aspects \cite{Albert2000Error,wood2002denial,Buldyrev2009Catastrophic,F.Pa2013Attack}. 

In networked systems, a common type of fault is the perturbation/variant of components of its network structure. For example, links may be blocked or removed, making signals unable to be transmitted normally, and nodes (agents) may not operate normally or even lose communications with their neighbors, leading to loss of system performances. Such type of structure variants can result from either the failure of network components (such as links or nodes), or denial-of-service attacks \cite{Albert2000Error,wood2002denial,Buldyrev2009Catastrophic,Pasqualetti2012ConsensusCI,De2015Input}. The failure of a set of links or nodes is collectively called {\emph{topology failure}} in this paper. Topology failures may have disastrous impacts on the security and normal functioning of a networked system. One example is the catastrophic power outage in southern Italy in 2003, which was reportedly caused by failures of some high voltage transmission lines \cite{Buldyrev2009Catastrophic}. Considering the possible catastrophic cascading consequences caused by topology failures, the timely detection and isolation have become particularly important \cite{F.Pa2013Attack}. 

Fault detection and isolation (FDI) have long been active in control community \cite{Chen1999RobustMF,Chi2015SensorPF,Massoumnia1986AGA,Commault2002ObserverbasedFD,Namburu2007DataDrivenMF,Zhou2020ReviewOD}. The main targets are to determine whether faults occur and to locate them. Many detection and isolation approaches have been proposed, including geometric theory based approaches \cite{Massoumnia1986AGA}, observer-based approaches \cite{Commault2002ObserverbasedFD}, data-driven approaches \cite{Namburu2007DataDrivenMF}, and so on.  However, the majority of literature on this topic deals with faults that are linked to either additional external signals or undesired parameter deviations \cite{Chen1999RobustMF,Zhou2020ReviewOD}. Topology failures, on the other hand, result in perturbations on the structure of system intrinsic dynamics. Unlike common parameter deviations, topology failures shift the nominal parameters to only some discrete values, which are usually hard to be modeled as external disturbances.

Nevertheless, in literature, the detection of topology failures has drawn on FDI techniques, that is, by comparing the discrepancies between the current system output and the nominal output to determine whether the system has undergone topology failures \cite{Chen1999RobustMF}. Such problems have recently attracted researchers' attention. In \cite{Aghdam2012Characterization,Aghdam2012Detectability},  Rahimian {\textsl{et~al.}} studied detectability of single or multiple link failures for multi-agent systems under the agreement protocol. They introduced the concept of distinguishable flow graph and gave sufficient conditions to distinguish faulty links. In \cite{Battistelli2015DetectingTV}, Battistelli and Tesi used mode observability from switching systems theory to characterize indiscernible states in networks of single-integrators, i.e., the initial states that generate exactly the same outputs for the nominal system and the system after failures. The same authors further extended the former work to networked diffusively coupled high-order systems \cite{BattistelliDetection2017}, whiles  Patil \textsl{et~al.} considered indiscernible topological variations in networks with descriptor subsystems, where the subsystems can be heterogeneous \cite{Patil2019IndiscernibleTV}. In \cite{Rahimian2015DetectionAI}, Rahimian and Preciado studied detection and isolation algorithms of single-link failure in networked linear systems. They related the discontinuity of higher-order derivatives of system outputs caused by the removal of a single link to the distance from the end of the removed link to the observed node.

However, all of the above works depend on accurate system parameters, which means accurate parameters are required when applied. In addition, it is usually not easy to extend their results to the case with simultaneous failures of multiple links/nodes. For many practical systems, accurate system parameters may be hard to obtain, but their zero-nonzero patterns, i.e., which entry of the system matrices is zero and which is not, might be easier accessible. This forms a class of systems sharing the same ``structure''. In control theory, some properties will become {\emph{generic}} in this class of systems, i.e., either for almost all systems in such class, these properties hold true, or for none these properties hold true. For example, controllability and observability are two well-known generic properties, both for a lumped structured plant \cite{generic} and a networked system with fixed subsystem dynamics and unknown subsystem interaction weights \cite{zhang2019structural,zhang2020structural}.

Generic properties are particularly prominent in analyzing large-scale networked systems, not only because they usually can intuitively show how topologies influence the considered properties, but also because they often can be verified efficiently by means of graphical tools \cite{generic,zhang2019structural}. In this paper, we study generic detectability and isolability of topology failures for a networked linear system, where subsystem dynamics are given and identical, but the weights of interaction links among them are unknown. We study under what conditions we can generically detect and isolate a given topology failure (set) from the nominal system dynamics and its output measurements. These conditions reveal fundamental limitations for the network topology to support detectability and isolability of a given failure (set), and are irrespective of the exact detection and isolation algorithms one adopts. Our main contributions are as follows.

1) We give algebraic conditions for detectability and isolability of topology failures.  These conditions are general enough in the sense that, they remain valid for arbitrary parameter perturbations in the system matrices not necessarily resulting from topology failures. 

2) We prove that detectability and isolability of topology failures for a networked system are both generic properties. That means,  it is how subsystems are interconnected, rather than the exact interaction weights, that dominates detectability and isolability of a given topology failure (set) for a class of networked systems sharing the same topologies. 

3) We give necessary and sufficient graph-theoretic conditions for generic detectability and isolability of a given topology failure (set). Compared to the existing literature \cite{Aghdam2012Characterization,Aghdam2012Detectability,BattistelliDetection2017}, these conditions are applicable to larger classes of topology failures, including single-link failure, single-node failure, or the failure of an arbitrary set of links. {Some characterizations of (non) generically isolable failure sets are also given.} Particularly, one interesting finding is that, the conditions for generic detectability of every single-link failure are equivalent to those for generic isolability of the set of all single-link failures. 

4) Finally, on the basis of the above results, we consider sensor placement problems aiming to using the minimal number of sensors to make a given failure (set) generically detectable (isolable). We reduce these problems to the hitting set problems, and use greedy algorithms to approximate them with guaranteed performances.

The rest of this paper is organized as follows. Section II gives problem formulations and some preliminaries. Section III provides algebraic conditions for detectability and isolability of topology failures for a lumped plant. Section IV demonstrates that the detectability and isolability are generic properties. Graph-theoretical conditions for the generic detectability and isolability of topology failures are given in Section {
\ref{fengwei}}. Sensor placement problems to achieve generic detectability and isolability are discussed in Section {\ref{section_sensor}}. In Section {\ref{section_example}},  some simulations and examples are provided to validate the theoretical results. The last section concludes this paper.

{\emph{Notations:}} $\mathbb R$, $\mathbb C$ and $\mathbb N$ denote the sets of real, complex and integer numbers, respectively.  For a set, $|\cdot |$ denotes its cardinality. For a matrix $M$, $M_{ij}$ or $[M]_{ij}$ denotes the entry in the $i$th row and $j$th column of $M$, and ${{\bf { ker}}(M)}$ denotes the null space of $M$. By ${\bf diag}\{X_i|_{i=1}^n\}$ we denote the block diagonal matrix whose $i$th diagonal block is $X_i$, and ${\bf col}\{X_i|_{i=1}^n\}$ the matrix stacked by $X_i|_{i=1}^n$. By $e_i^{[N]}$ we denote the $i$th column of the $N$ dimensional identify matrix $I_N$, and $e_{ij}^{[N]}$ the $N\times N$ matrix whose $(i,j)$th entry is one and the rest are zero. Symbol ${\bf abs}(\cdot)$ takes the absolute value of a scalar, and $A\otimes B$ denotes the Kronecker product of matrices $A$ and $B$.  For a square matrix $M$, $\rho(M)$ denotes its spectral radius, namely, the maximum absolute value of its eigenvalues.

\section{Problem Formulation and Preliminaries}\label{model_description}
\subsection{Preliminaries}
Concepts in graph theory:   In a directed graph (digraph) ${\cal D}=({\cal V},{\cal E})$, where $\cal V$ is the node set and ${\cal E} \subseteq {\cal V}\times {\cal V}$ is the edge (or link) set, a path from $v_i\in {\cal V}$ to $v_j\in {\cal V}$ is a sequence of edges $\{(v_{i}, v_{i+1}),(v_{i+1},v_{i+2}), ..., (v_{j-1}, v_{j})\}$. The length of a path is the number of edges it contains. The distance from $v_i$ to $v_j$ in $\cal D$, denoted by ${\rm dist}(v_i,v_j,{\cal D})$, is the length of the shortest path from $v_i$ to $v_j$. If there is no path from $v_i$ to $v_j$, then ${\rm dist}(v_i,v_j,{\cal D})=\infty$. In this paper, adjacency matrix of a weighted digraph $\cal D$ is a matrix $W\in {\mathbb R}^{|{\cal V}|\times |{\cal V}|}$ such that $W_{ij}\ne 0$ only if $(v_j,v_i)\in {\cal E}$, where $W_{ij}$ is the weight of $(v_j,v_i)$, the edge from $v_j$ to $v_i$. {\footnote{{It is worth noting that this definition is a little different from the conventional one, where $W_{ij}$ corresponds to the edge $(v_i,v_j)$.}}}

\subsection{Detectability and Isolability of Topology Failures} \label{problem}
Consider a networked system consisting of $N$ linear time invariant subsystems. Let ${\cal G}=({\cal V}, {\cal E})$ be a digraph describing the subsystem interconnection topology, with the node set ${\cal V}=\{1,...,N\}$, and a directed edge $(i,j)\in {\cal E}_{\rm sys}$ from node $i$ to node $j$ exists if the $j$th subsystem is directly influenced by the $i$th one. Dynamics of the $i$th subsystem is \footnote{In this paper, we focus on how the network topology plays its role in failure detectability and isolability. Hence, we do not take the external inputs into consideration (i.e., the external inputs are fixed to be zero). However, our approaches can be extended to the case with {\emph{known}} external inputs.}
\begin{equation} \label{sub_dynamic}
\dot x_i(t)= Ax_i(t) + B\sum\limits_{j=1}^Nw_{ij}\Gamma x_j(t),\\~
y_i(t)=Cx_i(t)
\end{equation}
where $A\in {\mathbb R}^{n\times n}$ is the state transition matrix, $B\in {\mathbb R}^{n\times m}$ is the input matrix, $\Gamma\in {\mathbb R}^{m\times n}$ is the internal coupling matrix between subsystems, $x_i(t)\in {\mathbb R}^{n}$ is the state vector, $y_i(t)\in {\mathbb R}^p$ is the subsystem output vector, and $w_{ij}\in {\mathbb R}$ is the weight of edge (link) from the $j$th subsystem to the $i$th one satisfying $w_{ij} \ne 0$ only if $(j,i)\in {\cal E}$, for $i,j\in\{1,...,N\}$. Denote the set of all weights $w_{ij}$ by $\{w_{ij}\}$. Notice that self-loops could be contained in $\cal E$, which could result from self-feedbacks, consensus-based agreement protocols, etc. %

Suppose that subsystems indexed by the set ${\cal S}\subseteq \{1,...,N\}$ are directly measured. Define
$$S\doteq {\bf col}\{[{e^{[N]}_i}]^{\intercal}|_{i\in {\cal S}}\}.$$ Let $x(t)=[x_1^{\intercal}(t),...,x_N^{\intercal}(t)]^{\intercal}$, $y(t)={\bf col}\{y_i(t)|_{i\in {\cal S}}\}$, and $W=[w_{ij}]$ be the adjacency matrix of $\cal G$. The lumped state-space representation of (\ref{sub_dynamic}) then is
\begin{equation} \label{lump_ss} \begin{aligned}
\dot x(t)=\Phi x(t), y(t)=Q x(t)\end{aligned}
\end{equation} where
\begin{equation} \label{lump_pp} \begin{array}{l}
 \Phi =I_N\otimes A+ W \otimes H, Q=S\otimes C,\end{array} \end{equation}with $H\doteq B\Gamma\in {\mathbb R}^{n\times n}$. Let $n_x\doteq Nn, n_y\doteq |{\cal S}|p$, then $\Phi\in {\mathbb R}^{n_x\times n_x}$, $Q\in {\mathbb R}^{n_y\times n_x}$.

Equation (\ref{sub_dynamic}) models a networked system with multi-input-multi-output subsystems, which arises in modeling interacted liquid tanks \cite{Modern_Control_Ogata}, synchronizing networks of linear oscillators \cite{LucaSynchronization,zhang2019structural}, electrical systems \cite{tuna2017observability}, power networks \cite{kundur1994power}, etc. 

In practical engineering, common topology failures include link failures and node (or agent) failures. The failure of a set of links ${\cal E}_f\subseteq {\cal E}$ corresponds to that all edges in ${\cal E}_f$ are removed from $\cal G$.  The failure of a set of nodes ${\cal V}_f\subseteq {\cal V}$ corresponds to that, for each node $i\in {\cal V}_f$, all edges adjacent to $i$,  i.e., $\{(i,j):(i,j)\in {\cal E}\}\bigcup\{(j,i):(j,i)\in {\cal E}\}$, are removed from $\cal G$. Obviously, node failures are special cases of link failures. Hence, we shall focus on link failures in the rest of this paper, and we will use the link set ${\cal E}_f\subseteq {\cal E}$ to denote the failure of removing all links of ${\cal E}_f$ from $\cal G$. With the failure ${\cal E}_f$,  the topology of the resulting networked system becomes ${\bar {\cal G}}=({\cal V}, {\cal E}\backslash{\cal E}_{f})$, with its adjacency matrix being denoted by $\bar W$, which is obtained from $W$ by setting the entries corresponding to ${\cal E}_f$ to zero.  We express  dynamics of (\ref{sub_dynamic}) after failure ${\cal E}_f$ as
\begin{equation} \label{lumpfaulty} \begin{aligned} \dot x(t)=\bar \Phi x(t), y(t)=Q x(t)\end{aligned} \end{equation}
with $\bar \Phi=I_N\otimes A+ \bar W \otimes H$.

The above formulation rises an interesting problem: Is it possible to detect and isolate topology failures from system outputs given the faultless nominal network dynamics (\ref{sub_dynamic}) ?  Let $y(x_0,{\cal G},t)$ (respectively, $y(x_0,\bar {\cal G}, t)$) be the output vector of networked system (\ref{sub_dynamic}) with topology ${\cal G}$  ($\bar {\cal G}$) and initial state $x_0$ at time $t\ge 0$. Following \cite{Aghdam2012Characterization,Aghdam2012Detectability,BattistelliDetection2017}, the detectability of failure ${\cal E}_f$ is defined as follows.

\begin{definition} \label{defdection} For networked system (\ref{sub_dynamic}), a failure ${\cal E}_f \subseteq {\cal E}$ is detectable if there exists an initial state $x_0\in {\mathbb R}^{n_x}$ such that $y(x_0,{\cal G},t)-y(x_0,{\bar {\cal G}},t)\not\equiv 0$, $t\ge 0$.
\end{definition}


{ \begin{remark} Definition \ref{defdection} concerns only the existence of an initial state that induces different outputs (see \cite{Ding2008ModelbasedFD,Aghdam2012Characterization,Aghdam2012Detectability,Rahimian2015DetectionAI} for similar definitions). In other words, not every initial state satisfies the inequality in this definition (in fact, the zero state $x_0=0_{n_x}$ always does not), and the feasible initial state $x_0$ may need to be chosen (or possibly known) for the detection implementation. It is also safe to say that Definition \ref{defdection} presents the minimal conditions required to detect topology failures from system output measurements. This may contract with similar definitions in the attack detection (c.f. \cite{F.Pa2013Attack}), which requires that every element in the attack set should be detectable because of the stealthiness behaviour of the attacks.
%

\end{remark}}


In the failure isolation problem, it is often the case that the exact failure is not known, but we may have prior knowledge of the possible failure candidates \cite{Chi2015SensorPF}. Suppose that the~emerging~failure belongs to a known prior topology failure set ${\mathbb{E}}=\{{\cal E}_1,...,{\cal E}_r\}$, where ${\cal E}_i\subseteq {\cal E}$, and $r$ is finite. For example, if at most one link is removed (namely, single-link failure), then ${\mathbb{E}}={\cal E}$. Since ${\mathbb{E}}$ is a combinatorial set of links in $\cal E$, we have $|{\mathbb{E}}|\le 2^{|\cal E|}$. Let ${\cal E}_0=\emptyset$. For each ${\cal E}_i$, $0\le i \le r$, let ${\cal G}_i=({\cal V}, {\cal E}\backslash {\cal E}_i)$. Failure isolation is possible from a prior failure set $\mathbb E$, only if there is a unique topology ${\cal G}_i$ that can explain the output response of the resulting networked system.

\begin{definition}\label{def1isolation} For networked system (\ref{sub_dynamic}), a failure set ${\mathbb{E}}=\{{\cal E}_1,...,{\cal E}_r\}$ is isolable if for any two integers $i,j\in \{0,...,r\}$ with $i\ne j$, there exists $x_{0ij}\in {\mathbb R}^{n_x}$ such that $y(x_{0ij},{\cal G}_i,t)-y(x_{0ij},{\cal G}_j,t)\not\equiv 0$.
\end{definition}

{ \begin{remark} The key difference between definitions of the topology identifiability in \cite{Waarde2019TopologyIO} and the failure isolability in Definition \ref{def1isolation} lies in that, the former requires that any changes in the nominal value of $W$ will affect the corresponding output response, whiles the latter only concerns the affections on the corresponding output response at some finite discrete perturbations of $W$ (this is also the key reason why the detectability and isolability studied in this paper are generic properties). As mentioned in \cite{BattistelliDetection2017}, such difference is due to the knowledge of system nominal dynamics and the prior failure set. \end{remark}}

We will show in the next section that, if a failure set ${\mathbb{E}}=\{{\cal E}_1,...,{\cal E}_r\}$ is isolable, then there exists a common $x_0\in {\mathbb R}^{n_x}$, such that $y(x_0,{\cal G}_i,t)-y(x_0,{\cal G}_j,t)\not\equiv 0$ for any two integers $i,j\in \{0,...,r\}$ with $i\ne j$.

In many practical scenarios, while parameters $A,B,\Gamma,C$ for subsystem dynamics are often known from physically modeling (one of the most common dynamics is the high-order integrator) or system identification,  the exact weights $\{w_{ij}\}$ among subsystems might be hard to know due to parameter uncertainties or geographical distance between subsystems. However, the knowledge about which $w_{ij}$ is zero or not may be easily accessible \cite{shahrampour2015topology,Waarde2019TopologyIO,zhang2019structural,zhang2020structural}. We will show failure detectability and isolability are generic properties. In other words, either for almost all weights $\{w_{ij}\}$ with the corresponding zero-nonzero patterns, a given failure (set) is detectable (isolable), or for all weights $\{w_{ij}\}$ with the corresponding zero-nonzero patterns, the answers to the same problems are NO.  The purpose of this paper is to find conditions under which such generic properties hold true, and apply them to the associated sensor placement problems.

\begin{remark}In literature, the assumption of knowing subsystem dynamics but with little/no knowledge on the subsystem interaction weights is common in many aspects on networked systems, including topology reconstruction \cite{shahrampour2015topology}, system identification \cite{Waarde2019TopologyIO}, as well as structural controllability \cite{zhang2019structural,zhang2020structural}.
\end{remark}

\section{Algebraic Conditions for Failure Detectability and Isolability} \label{algebraic_condition}
In this section, we will give necessary and sufficient algebraic conditions for failure detectability and isolability. We assume that all parameters for the nominal dynamics (\ref{sub_dynamic}) are known, including the weights $\{w_{ij}\}$. Our conditions are in terms of the lumped state-space parameters {(\ref{lump_ss})} and the corresponding parameter perturbations. In other words, our results can be seen as conditions for either networks of single-integrators, or state-space modeled plants where the parameter perturbations do not necessarily result from topology failures.

\begin{definition} \label{defdistinguish} Consider $(\Phi, Q)$, $(\bar \Phi, Q)$ in (\ref{lump_ss}) and (\ref{lumpfaulty}) respectively. Let $y(x_0,\Phi,t)$ and $y(x_0,\bar \Phi, t)$ be the output signals of system (\ref{lump_ss}) and system (\ref{lumpfaulty}), respectively, with initial state $x_0$. We say $(\Phi, Q)$ and $(\bar \Phi, Q)$ are distinguishable (also say $\Phi$ and $\bar \Phi$ are distinguishable if $Q$ is implicitly known), if there exists $x_0\in {\mathbb R}^{n_x}$, such that $y(x_0,\Phi,t)-y(x_0,\bar \Phi,t)\not\equiv 0$, $t\ge 0$.
\end{definition}

{ As mentioned above,  Definition \ref{defdistinguish} does not need to hold for every initial state $x_0\in {\mathbb R}^{n_x}$. However, as will be shown in Proposition \ref{equal_isolation}, if $(\Phi, Q)$ and $(\bar \Phi, Q)$ are distinguishable, then almost all initial states {\emph{except a set of zero Lebesgue measure}} in ${\mathbb R}^{n_x}$ satisfy the inequality in Definition \ref{defdistinguish}.  We refer readers to \cite{Battistelli2015DetectingTV,BattistelliDetection2017,Patil2019IndiscernibleTV} for some characterizations of the initial states violating that inequality. It is also worthy to note that a related notion named output distinguishability can be found in \cite{Baglietto2014DistinguishabilityOD}, which requires that the corresponding system outputs (even from the same nominal system in two experiments) with not necessarily the same initial states should be different. Such definition is stricter than Definition \ref{defdistinguish} on $(\Phi, Q)$ and $(\bar \Phi, Q)$, and is often used for the offline scenario where the corresponding outputs may come from multiple experiments/processes or the initial states are unavailable.}

By Definitions \ref{defdection} and \ref{defdistinguish}, for networked system (\ref{sub_dynamic}) the link failure ${\cal E}_f$ is detectable, if and only if $(\Phi, Q)$ and $(\bar \Phi, Q)$ are distinguishable.  The following theorem gives necessary and sufficient conditions for distinguishability of $(\Phi,Q)$ and $(\bar \Phi, Q)$.
\begin{theorem} \label{algebraic_theorme1}
Given $(\Phi,Q)$ and $(\bar \Phi, Q)$ in (\ref{lump_ss}) and (\ref{lumpfaulty}) respectively, let the perturbation matrix $\Delta \Phi\doteq\Phi-\bar \Phi \in {\mathbb R}^{n_x\times n_x}$.  The following statements are equivalent:

(1) $(\Phi, Q)$ and $(\bar \Phi, Q)$ are distinguishable;

(2) $\left[
      \begin{array}{c}
        Q\Delta \Phi \\
        Q\Phi \Delta\Phi \\
        \vdots \\
        Q\Phi^{n_x-1}\Delta\Phi\\
      \end{array}
    \right]\ne 0$;

(3) The transfer function  $Q(\lambda I-\Phi)^{-1}\Delta\Phi\not\equiv 0$.
\end{theorem}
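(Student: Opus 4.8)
The plan is to prove the chain of equivalences $(1)\Leftrightarrow(2)\Leftrightarrow(3)$ by analyzing the difference signal $\delta(x_0,t)\doteq y(x_0,\Phi,t)-y(x_0,\bar\Phi,t)$. The starting observation is that since both systems share the same output matrix $Q$ and differ only in the state matrix, we have $\delta(x_0,t)=Q(e^{\Phi t}-e^{\bar\Phi t})x_0$. Thus $(\Phi,Q)$ and $(\bar\Phi,Q)$ are distinguishable if and only if the matrix-valued function $t\mapsto Q(e^{\Phi t}-e^{\bar\Phi t})$ is not identically zero on $t\ge 0$. Because this is an analytic (indeed entire) function of $t$, it vanishes on $[0,\infty)$ iff all its derivatives at $t=0$ vanish, i.e.\ iff $Q(\Phi^k-\bar\Phi^k)=0$ for every $k\ge 0$. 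So the first reduction is: distinguishability fails $\iff Q\Phi^k=Q\bar\Phi^k$ for all $k\ge 0$.

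Next I would relate the family $\{Q(\Phi^k-\bar\Phi^k)\}_{k\ge0}$ to the single matrix stack in statement (2). The identity $\Phi^k-\bar\Phi^k=\sum_{j=0}^{k-1}\Phi^{j}(\Phi-\bar\Phi)\bar\Phi^{\,k-1-j}=\sum_{j=0}^{k-1}\Phi^{j}\,\Delta\Phi\,\bar\Phi^{\,k-1-j}$ lets one argue inductively that $Q(\Phi^k-\bar\Phi^k)=0$ for all $k$ if and only if $Q\Phi^{j}\Delta\Phi\,\bar\Phi^{\ell}=0$ for all $j,\ell\ge 0$; one direction is immediate from the telescoping sum, and for the converse one peels off terms by increasing $k$. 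Then, fixing the leftmost factor, $Q\Phi^{j}\Delta\Phi\,\bar\Phi^{\ell}=0$ for all $\ell$ amounts to $Q\Phi^{j}\Delta\Phi=0$ after a Cayley--Hamilton argument applied to $\bar\Phi$ (the row space of $Q\Phi^j\Delta\Phi$ is $\bar\Phi$-invariant once it is killed by all powers $\bar\Phi^\ell$, forcing it to be zero; alternatively just note $\bar\Phi^0=I$ already gives $Q\Phi^j\Delta\Phi=0$ directly from the $\ell=0$ term, so this step is actually trivial). Finally, Cayley--Hamilton applied to $\Phi$ (degree $n_x$) shows that $Q\Phi^{j}\Delta\Phi=0$ for all $j\ge0$ is equivalent to its holding for $j=0,1,\dots,n_x-1$, which is exactly the negation of (2). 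This closes $(1)\Leftrightarrow(2)$.

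For $(2)\Leftrightarrow(3)$ I would expand the transfer function as a Neumann/Laurent series: for $|\lambda|$ large, $Q(\lambda I-\Phi)^{-1}\Delta\Phi=\sum_{k\ge0}\lambda^{-(k+1)}Q\Phi^{k}\Delta\Phi$. By uniqueness of Laurent expansions (the entries are rational and the expansion is valid in a punctured neighborhood of $\infty$), this rational matrix function is identically zero iff every coefficient $Q\Phi^k\Delta\Phi$ vanishes; combined with the Cayley--Hamilton truncation above, that is precisely the negation of (2). The main obstacle I anticipate is not conceptual but bookkeeping: namely justifying cleanly that killing the stack of $n_x$ blocks in (2) already kills \emph{all} higher-order blocks $Q\Phi^j\Delta\Phi\bar\Phi^\ell$ simultaneously (both indices unbounded), which is where one must invoke Cayley--Hamilton for both $\Phi$ and $\bar\Phi$ and be careful that $\Phi$ and $\bar\Phi$ do not commute. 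A secondary subtlety worth a sentence is the "only if" direction of $(1)$: one must rule out the possibility that $t\mapsto Q(e^{\Phi t}-e^{\bar\Phi t})x_0$ vanishes on $[0,\infty)$ for every $x_0$ while the matrix function itself is nonzero — but this is immediate since "for every $x_0$" is just another way of saying the matrix is zero.
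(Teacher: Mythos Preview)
Your proposal is correct and follows essentially the same route as the paper: reduce distinguishability to $Q(\Phi^k-\bar\Phi^k)=0$ for all $k$ via the Taylor expansion of the matrix exponentials, convert this to $Q\Phi^{j}\Delta\Phi=0$ for all $j$, truncate by Cayley--Hamilton, and handle $(2)\Leftrightarrow(3)$ through the Neumann series of $(\lambda I-\Phi)^{-1}$. The only cosmetic difference is that the paper replaces your telescoping identity by the one-line recursion $Q(\Phi^{i}-\bar\Phi^{i})=Q\Phi^{i-1}\Delta\Phi$ (valid once $Q\Phi^{i-1}=Q\bar\Phi^{i-1}$ is assumed inductively), which sidesteps the $\bar\Phi^\ell$ bookkeeping you flagged as the main obstacle; as you already noticed, that obstacle is in any case trivial because $\ell=0$ suffices.
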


\begin{proof}(1) $\Leftrightarrow$ (2): For a given $x_0\in {\mathbb R}^{n_x}$, $y(x_0,{\cal G},t)=Qe^{\Phi t}x_0$, $y(x_0,\bar {\cal G}, t)=Qe^{\bar \Phi t}x_0$.
To make $Qe^{\Phi t}x_0-Qe^{\bar \Phi t}x_0\equiv 0$ for arbitrary $x_0\in {\mathbb R}^{n_x}$, $Qe^{\Phi t}-Qe^{\bar \Phi t}\equiv 0$ must hold. 
Notice that,
$$ \begin{aligned}  Qe^{\Phi t}&=Q(It+\Phi t+ \frac{1}{2}\Phi^2 t^2+\frac{1}{6}\Phi^3t^3+\cdots)\\
Qe^{\bar \Phi t}&=Q(It+\bar \Phi t+ \frac{1}{2}\bar \Phi^2 t^2+\frac{1}{6}\bar \Phi^3t^3+\cdots).\end{aligned}
$$Hence, $$Qe^{\Phi t}-Qe^{\bar \Phi t}=Q(\Phi-\bar \Phi)t+\frac{1}{2}Q(\Phi^2-\bar \Phi^2)t^2+\cdots.$$
Therefore, $Qe^{\Phi t}-Qe^{\bar \Phi t}\equiv 0$ requires that $Q(\Phi^i-\bar \Phi^i)=0$, for $i=1,...,\infty$. Notice that, if $Q(\Phi^{i-1}-\bar \Phi^{i-1})=0$ for some $i\ge 1$ (in fact it holds for $i=1$), then
$Q\Phi^i-Q\bar \Phi^{i}=Q\Phi^i-Q\bar \Phi^{i-1}\bar \Phi=Q\Phi^i-Q\Phi^{i-1}\bar \Phi=Q\Phi^{i-1}(\Phi-\bar \Phi)=Q\Phi^{i-1}\Delta\Phi.$  This means that, the condition $Q(\Phi^i-\bar \Phi^i)=0$ for $i=1,...,\infty$ is equivalent to $Q\Phi^{i-1}\Delta\Phi=0$ for $i=1,...,\infty$. According to the Cayley-Hamiltion theorem \cite{K.J.1988Multivariable}, if $Q\Phi^{i-1}\Delta\Phi=0$ for $i=1,...,n_x$, then for any $i\ge n_x+1$, there exists $(a_0,\cdots,a_{n_x-1})\in {\mathbb R}^{n_x}$, such that
$Q\Phi^{i-1}\Delta\Phi=\sum \nolimits_{i=1}^{n_x}a_{i-1}Q\Phi^{i-1}\Delta\Phi=0$. Hence, this proves that, (1) and (2) are equivalent.


(2) $\Leftrightarrow$ (3): We will first show that (2) $\Rightarrow$ (3), equivalently, if $Q(\lambda I-\Phi)^{-1}\Delta \Phi\equiv 0$, then $Q\Phi^{i-1}\Delta\Phi=0$ for $i=1,...,n_x$. In fact, when $\lambda>\rho(\Phi)$, it holds that
$$\begin{aligned} Q(\lambda I-\Phi)^{-1}\Delta\Phi &= Q\lambda^{-1}I(I+\lambda^{-1}\Phi+\lambda^{-2}\Phi^2+\cdots)\Delta \Phi \\
&=\sum\nolimits_{i=1}^{\infty}\lambda^{-i}Q\Phi^{i-1}\Delta\Phi.
\end{aligned}$$
To make $Q(\lambda I-\Phi)^{-1}\Delta \Phi=0$, each coefficient of $\lambda^{-i}$ must be zero. That is, $Q\Phi^{i-1}\Delta\Phi=0$ for $i=1,...,\infty$, which is equivalent to that $Q\Phi^{i-1}\Delta\Phi=0$ for $i=1,...,n_x$.

We are now proving (3) $\Rightarrow$ (2). Consider the converse-negative direction. Suppose that (2) is not true.  If $\lambda>\rho(\Phi)$, by the Cayley-Hamiltion theorem, there exists $(a_0,\cdots,a_{n_x-1})\in {\mathbb R}^{n_x}$, such that
$$
(\lambda I-\Phi)^{-1}=\lambda^{-1}\sum\nolimits_{i=0}^{\infty}(\lambda^{-1}\Phi)^{i}=\sum \nolimits_{i=0}^{n_x-1} \lambda^{-1-i}a_i\Phi^{i}.
$$
Hence, $Q(\lambda I-\Phi)^{-1}\Delta \Phi= \sum \nolimits_{i=0}^{n_x-1} \lambda^{-1-i}a_iQ\Phi^{i}\Delta\Phi=0$ holds for all $\lambda> \rho(\Phi)$. This further means that $Q(\lambda I-\Phi)^{-1}\Delta \Phi= 0$ for all $\lambda \in {\mathbb C}$. Hence we have (3) $\Rightarrow$ (2), which finishes the proof.
\end{proof}

Condition (3) of Theorem \ref{algebraic_theorme1} suggests the distinguishability of $(\Phi, Q)$ and $(\bar \Phi, Q)$ requires that, the perturbation $\Delta \Phi$ in the system state transition matrices can be inflected in the system output response. { From the derivations of Theorem \ref{algebraic_theorme1}, this condition does not depend on the observation time. In fact, if there exists $\tau\in (0,t_1]$ for some $t_1<\infty$ such that $Qe^{\Phi \tau}x_0-Qe^{\bar \Phi \tau}x_0\ne 0$ for a given initial state $x_0$ as in Definition \ref{defdistinguish}, then for arbitrary $t_2$ and $t_3$ satisfying $0\le t_2<t_3<\infty$, there exists a $\tau \in (t_2,t_3]$ ($\tau$ can be called the observation time) making $Qe^{\Phi \tau}x_0-Qe^{\bar \Phi \tau}x_0\ne 0$ (note that at least one entry of $Qe^{\Phi \tau}x_0-Qe^{\bar \Phi \tau}x_0$ is a non-identically zero polynomial of $\tau$ in this case).}

Consider the failure set ${\mathbb E}=\{{\cal E}_1,...,{\cal E}_r\}$. Let $\Phi_i$ be the lumped state transition matrix of the networked system after the link failure ${\cal E}_i$, $1\le i \le r$, which is defined in the same way as $\bar \Phi$ for ${\cal E}_f$, and let $\Phi_0\doteq \Phi$. From Definitions \ref{def1isolation} and \ref{defdistinguish}, ${\mathbb E}$ is isolable, if and only if for any two integers $i,j\in \{0,...,r\}$ with $i\ne j$, $(\Phi_i,Q)$ and $(\Phi_j,Q)$ are distinguishable. Combined with Theorem \ref{algebraic_theorme1}, this immediately leads to the following proposition.

\begin{proposition}\label{algebraic_isolation} For networked system (\ref{sub_dynamic}), a failure set ${\mathbb E}=\{{\cal E}_1,...,{\cal E}_r\}$ is isolable, if for any two integers $i,j\in \{0,...,r\}$, $i\ne j$, $Q(\lambda I- \Phi_i)^{-1}\Delta \Phi_{ij}\not\equiv 0$ holds, where $\Delta \Phi_{ij}\doteq \Phi_i-\Phi_j$.
\end{proposition}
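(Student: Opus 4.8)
The plan is to derive this proposition directly from Theorem \ref{algebraic_theorme1} by unwinding the relevant definitions. First I would rewrite Definition \ref{def1isolation}: the failure set $\mathbb{E}=\{\mathcal{E}_1,\dots,\mathcal{E}_r\}$ is isolable precisely when, for any two distinct indices $i,j\in\{0,\dots,r\}$, there is some $x_{0ij}\in\mathbb{R}^{n_x}$ with $y(x_{0ij},\mathcal{G}_i,t)-y(x_{0ij},\mathcal{G}_j,t)\not\equiv 0$; here $\mathcal{G}_0=\mathcal{G}$ because $\mathcal{E}_0=\emptyset$, and $y(x_0,\mathcal{G}_i,t)=Qe^{\Phi_i t}x_0$ with $\Phi_i=I_N\otimes A+W_i\otimes H$ the lumped transition matrix after failure $\mathcal{E}_i$. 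Comparing with Definition \ref{defdistinguish}, this is exactly the requirement that $(\Phi_i,Q)$ and $(\Phi_j,Q)$ be distinguishable for every such pair.

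Next I would invoke Theorem \ref{algebraic_theorme1} with $\Phi_i$ playing the role of $\Phi$ and $\Phi_j$ playing the role of $\bar\Phi$, so that the perturbation matrix becomes $\Delta\Phi=\Phi_i-\Phi_j=\Delta\Phi_{ij}$. Its equivalence $(1)\Leftrightarrow(3)$ then states that $(\Phi_i,Q)$ and $(\Phi_j,Q)$ are distinguishable if and only if $Q(\lambda I-\Phi_i)^{-1}\Delta\Phi_{ij}\not\equiv 0$. Taking the conjunction over all distinct pairs $i,j$ yields the stated condition; in fact this shows the condition is not merely sufficient but also necessary for isolability.

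It is worth recording one small simplification: distinguishability is symmetric in its two arguments, since statement $(1)$ of Theorem \ref{algebraic_theorme1} is visibly invariant under swapping $\Phi$ and $\bar\Phi$; hence it suffices to check the condition for one representative of each unordered pair, and $Q(\lambda I-\Phi_i)^{-1}\Delta\Phi_{ij}\not\equiv 0$ may be replaced by $Q(\lambda I-\Phi_j)^{-1}\Delta\Phi_{ij}\not\equiv 0$ without changing its truth value. Since the proposition is essentially a repackaging of Theorem \ref{algebraic_theorme1}, there is no real obstacle in the argument; the only point requiring care is the index bookkeeping, in particular remembering to include the index $0$ (the faultless topology) among the pairs, so that detectability of each single failure $\mathcal{E}_i$ against the nominal network is enforced along with the mutual discrimination of the faulty topologies.
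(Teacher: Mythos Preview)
Your proposal is correct and follows exactly the paper's own reasoning: the paper observes (just before the proposition) that by Definitions \ref{def1isolation} and \ref{defdistinguish} isolability is equivalent to pairwise distinguishability of $(\Phi_i,Q)$ and $(\Phi_j,Q)$, and then invokes Theorem \ref{algebraic_theorme1} to obtain the transfer-function condition. Your additional remarks on necessity and on the symmetry of distinguishability are accurate refinements but do not depart from the paper's approach.
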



From their derivations, Theorem \ref{algebraic_theorme1} and Proposition \ref{algebraic_isolation} are valid for arbitrary parameter perturbations $\Delta \Phi$ (or $\Delta \Phi_{ij}$) not necessarily resulting from topology failures. On the basis of Proposition \ref{algebraic_isolation}, we give a property of an isolable failure set as follows.
\begin{proposition} \label{equal_isolation} For networked system (\ref{sub_dynamic}), if a failure set ${\mathbb E}=\{{\cal E}_1,...,{\cal E}_r\}$ is isolable, then there exists a common $x_0\in {\mathbb R}^{n_x}$, such that for any $i,j\in \{0,....,r\}$, ~$i\ne j$, $y(x_0,{\bar G}_i,t)-y(x_0,{\bar G}_j,t)\not\equiv 0$ holds. Moreover, denote the set of all $x_0$ satisfying the aforementioned condition by $X_0\subseteq {\mathbb R}^{n_x}$. Then, ${\mathbb R}^{n_x}\backslash X_0$ has Lebesgue measure zero  in ${\mathbb R}^{n_x}$.
\end{proposition}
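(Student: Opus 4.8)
The plan is to derive the first assertion from the second, so I would aim directly at proving that $\mathbb R^{n_x}\backslash X_0$ has Lebesgue measure zero; since a null set cannot be all of $\mathbb R^{n_x}$, this simultaneously produces a common feasible $x_0\in X_0$. For each ordered pair $i,j\in\{0,\dots,r\}$ with $i\ne j$, I would introduce the set of ``bad'' initial states
$$Z_{ij}\doteq\big\{x_0\in\mathbb R^{n_x}:\ y(x_0,{\bar G}_i,t)-y(x_0,{\bar G}_j,t)\equiv 0,\ t\ge 0\big\},$$
so that $X_0=\mathbb R^{n_x}\backslash\bigcup_{i\ne j}Z_{ij}$, a finite union because $r<\infty$; it then suffices to show each $Z_{ij}$ is Lebesgue-null.

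First I would observe that $Z_{ij}$ is a \emph{linear subspace} of $\mathbb R^{n_x}$. For each fixed $t$ the map $x_0\mapsto y(x_0,{\bar G}_i,t)-y(x_0,{\bar G}_j,t)=Q\big(e^{\Phi_it}-e^{\Phi_jt}\big)x_0$ is linear in $x_0$, hence $Z_{ij}=\bigcap_{t\ge 0}{\bf ker}\big(Q(e^{\Phi_it}-e^{\Phi_jt})\big)$ is an intersection of linear subspaces and therefore itself a subspace. Moreover, expanding in powers of $t$ and truncating by Cayley--Hamilton exactly as in the proof of Theorem~\ref{algebraic_theorme1} would show $Z_{ij}={\bf ker}\big({\bf col}\{Q(\Phi_i^{k}-\Phi_j^{k})|_{k=1}^{K}\}\big)$ for $K$ large enough, making the subspace structure concrete and linking it to condition~(2) of that theorem.

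Then I would invoke isolability. By Definitions~\ref{def1isolation} and \ref{defdistinguish}, if $\mathbb E$ is isolable then for every $i\ne j$ the pairs $(\Phi_i,Q)$ and $(\Phi_j,Q)$ are distinguishable, i.e. there exists at least one $x_0\in\mathbb R^{n_x}$ with $y(x_0,{\bar G}_i,t)-y(x_0,{\bar G}_j,t)\not\equiv 0$; any such $x_0$ lies outside $Z_{ij}$, so $Z_{ij}\subsetneq\mathbb R^{n_x}$. A proper linear subspace of $\mathbb R^{n_x}$ has dimension at most $n_x-1$ and hence Lebesgue measure zero. Consequently $\mathbb R^{n_x}\backslash X_0=\bigcup_{i\ne j}Z_{ij}$ is a finite union of null sets and so is null; in particular $X_0$ has full Lebesgue measure, is nonempty, and any $x_0\in X_0$ satisfies $y(x_0,{\bar G}_i,t)-y(x_0,{\bar G}_j,t)\not\equiv 0$ for all $i\ne j$ at once, which is precisely the claim.

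The main obstacle, such as it is, lies in the first step: recognizing that although $Z_{ij}$ is carved out by an infinite conjunction over $t\ge 0$, linearity of the output in the initial state makes it a genuine linear subspace (indeed, via Cayley--Hamilton, the kernel of an explicit finite matrix). After that the argument is routine --- a proper subspace of $\mathbb R^{n_x}$ is Lebesgue-null and a finite union of null sets is null, so finitely many proper subspaces cannot cover $\mathbb R^{n_x}$.
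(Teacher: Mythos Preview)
Your argument is correct and follows essentially the same route as the paper: for each pair $(i,j)$ the bad set $Z_{ij}$ is identified as the kernel of a finite matrix (the paper writes this as ${\bf ker}(F_{ij})$ with $F_{ij}={\bf col}\{[Q,-Q]{\bf diag}\{\Phi_i^k,\Phi_j^k\}[I;I]\}$, which equals your ${\bf col}\{Q(\Phi_i^k-\Phi_j^k)\}$), isolability makes each such kernel a proper subspace, and the finite union of proper subspaces is Lebesgue-null. Your phrasing ``a finite union of null sets is null'' is in fact cleaner than the paper's slightly loose claim that the union of proper subspaces is itself a proper subspace.
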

\begin{proof} Notice that
 $$\begin{aligned}y(x_0,{\cal G}_i,t)-y(x_0,{\cal G}_j,t)&=Qe^{\Phi_it}x_0-Qe^{\Phi_jt}x_0\\
 &=[Q,-Q]e^{{\bf diag}\{\Phi_i,\Phi_j\}t}\left[
                                                                                         \begin{array}{c}
                                                                                           I \\
                                                                                           I \\
                                                                                         \end{array}
                                                                                       \right]x_0.\end{aligned}
 $$
 Following the proof of Theorem \ref{algebraic_theorme1}, substitute the Taylor expansion of $e^{{\bf diag}\{\Phi_i,\Phi_j\}t}$ into the above formula, use the Cayley-Hamiltion theorem,  and we obtain that $Qe^{\Phi_it}x_0-Qe^{\Phi_jt}x_0\ne 0$, if and only if
 $$\underbrace{{\bf col}\left\{[Q,-Q]\left[
                                                                                  \begin{array}{cc}
                                                                                    \Phi_i^k & 0 \\
                                                                                    0 & \Phi_j^k \\
                                                                                  \end{array}
                                                                                \right]\right.\left.\left[
                                                                                         \begin{array}{c}
                                                                                           I \\
                                                                                           I \\
                                                                                         \end{array}
                                                                                       \right]\Big|_{k=1}^{2n_x-1}\right\}}_{\doteq F_{ij}}x_0 \ne 0.$$
 Hence, if $x_0\notin {\bf ker}(F_{ij})$, then $Qe^{\Phi_it}x_0-Qe^{\Phi_jt}x_0\ne 0$. If $\mathbb E$ is isolable, by Proposition \ref{algebraic_isolation}, $F_{ij}\ne 0$. Hence, ${\bf ker}(F_{ij})$ is a proper subspace of ${\mathbb R}^{n_x}$. In addition, $\bigcup \nolimits_{0\le i < j\le r} {\bf ker}(F_{ij})$ is also a proper subspace of ${\mathbb R}^{n_x}$ and has Lebesgue measure zero  in ${\mathbb R}^{n_x}$, since the union of any finite number of proper subspaces of ${\mathbb R}^{n_x}$ is a proper subspace of ${\mathbb R}^{n_x}$. Therefore, any $x_0$ in ${\mathbb R}^{n_x}\backslash {\bigcup \nolimits_{0\le i < j\le r} {\bf ker}(F_{ij})}$ makes $y(x_0,{\bar G}_i,t)-y(x_0,{\bar G}_j,t)\not\equiv 0$, for $i,j\in \{0,....,r\}$, $i\ne j$.
\end{proof}

{ Proposition \ref{equal_isolation} indicates that, a randomly generated initial state $x_0\in {\mathbb R}^{n_x}$ almost surely results in output responses that can isolate the exact failure from an isolable failure set $\mathbb E$.  Concerning the isolation implementation, with the knowledge of the faultless dynamics (\ref{sub_dynamic}) and the prior failure set $\mathbb E$ for a randomly generated initial state $x_0$, one {\emph{possible (centralized)}} approach may be using a bank of least square estimators (c.f. \cite{Battistelli2015DetectingTV}) or observer-based residual generators \cite{Ding2008ModelbasedFD} to distinguish every two of the candidate failures. Some data-driven approaches might also be possible candidates \cite{Zhou2020ReviewOD}. This is left for the future research.} 



\section{Genericity of Failure Detectability and Isolability}
From now on, we deal with the situation where the exact values of $\{w_{ij}\}$ are unknown, but their zero-nonzero patterns are accessible. We call a set of real values for $\{w_{ij}\}$ with the corresponding zero-nonzero patterns a weight realization. A property  is called {\emph{generic}}, if either for almost all weight realizations of $\{w_{ij}\}$ except for a set with Lebesgue measure zero  in the corresponding parameter space,  this property holds true, or for all weight realizations of $\{w_{ij}\}$, this property does not hold. In this section, we will prove that, failure detectability and isolability are generic properties for the considered networked systems. 

\begin{proposition}\label{generic_detect} For networked system (\ref{sub_dynamic}) with known $(A,H,C)$ and zero-nonzero patterns of the weights $\{w_{ij}\}$, detectability of a failure ${\cal E}_f\subseteq {\cal E}$ is a generic property.
\end{proposition}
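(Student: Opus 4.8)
The plan is to reduce generic detectability to a polynomial non-vanishing statement via Theorem \ref{algebraic_theorme1}, and then invoke the elementary fact that a multivariate polynomial which is not identically zero vanishes only on a set of Lebesgue measure zero. Fix the known matrices $(A,H,C)$ and the zero-nonzero pattern, and regard the free weights $\theta\doteq (w_{ij})_{(j,i)\in {\cal E}}$ as the parameters, ranging over ${\mathbb R}^{|{\cal E}|}$; the requirement that $w_{ij}\ne 0$ for every $(j,i)\in {\cal E}$ carves out an open dense subset of full measure, so genericity over this set and over ${\mathbb R}^{|{\cal E}|}$ coincide. Since $\Phi = I_N\otimes A + W\otimes H$ is affine in $\theta$, so is $\bar\Phi = I_N\otimes A + \bar W\otimes H$ (it only zeroes out the coordinates corresponding to ${\cal E}_f$), and $\Delta\Phi = \Phi-\bar\Phi = (W-\bar W)\otimes H$ is linear in $\theta$, while $Q = S\otimes C$ is constant. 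Hence every entry of the stacked matrix $M(\theta)\doteq {\bf col}\{Q\Phi^{k-1}\Delta\Phi|_{k=1}^{n_x}\}$ is a polynomial in $\theta$, and by the equivalence (1)$\Leftrightarrow$(2) of Theorem \ref{algebraic_theorme1} (together with the fact noted before its statement that ${\cal E}_f$ is detectable iff $(\Phi,Q)$ and $(\bar\Phi,Q)$ are distinguishable), ${\cal E}_f$ is detectable for the realization $\theta$ if and only if $M(\theta)\ne 0$.

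Next I would split into two cases according to whether the polynomial entries of $M(\theta)$ are all identically zero. If so, then $M(\theta)=0$ for every $\theta$, so ${\cal E}_f$ is not detectable for any weight realization, which is the ``for no realization'' alternative in the definition of a generic property. Otherwise some entry $p(\theta)$ is a nonzero polynomial, whose zero set $\{\theta\in {\mathbb R}^{|{\cal E}|}: p(\theta)=0\}$ is a proper algebraic subvariety and hence has Lebesgue measure zero; therefore $M(\theta)\ne 0$, i.e. ${\cal E}_f$ is detectable, for all $\theta$ outside a measure-zero set, which is the ``for almost all realizations'' alternative. In either case detectability of ${\cal E}_f$ is generic.

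The only delicate point is the bookkeeping that makes the entries of $M(\theta)$ genuinely polynomial in the free weights — that both $\Phi$ and $\Delta\Phi$, and hence each product $Q\Phi^{k-1}\Delta\Phi$, depend polynomially on $\theta$ — which is immediate from the affine structure $\Phi = I_N\otimes A + W\otimes H$; plus the auxiliary remark that restricting $\theta$ to admissible realizations (all designated weights nonzero) affects neither alternative, since that set is open and dense and a polynomial vanishing on it vanishes identically. No genuine estimate is needed; the statement is essentially a structural corollary of Theorem \ref{algebraic_theorme1}, and the same template will later carry over to isolability through Proposition \ref{algebraic_isolation}.
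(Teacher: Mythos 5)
Your proposal is correct and follows essentially the same route as the paper's proof: both reduce detectability to the non-vanishing of the polynomial matrix ${\bf col}\{Q\Phi^{k-1}\Delta\Phi|_{k=1}^{n_x}\}$ via condition (2) of Theorem \ref{algebraic_theorme1}, and then conclude by the fact that a not-identically-zero polynomial vanishes only on a proper algebraic variety of Lebesgue measure zero (the paper merely packages the scalar constraints into a single polynomial $F=\sum_i f_i^2$ instead of isolating one nonzero entry). Your extra remark on restricting to admissible nonzero weights is a harmless refinement of the same argument.
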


\begin{proof} Let $z_1,...,z_{|\cal E|}$ be free parameters in $\{w_{ij}\}$ that can take nonzero real values independently. Assume that ${\cal E}_f$ is undetectable, which requires that $Q\Phi^{k-1}\Delta \Phi=0$ for $k=1,...,n_x$, by Theorem \ref{algebraic_theorme1}. Each $Q\Phi^{k-1}\Delta \Phi=0$ induces at most $n_x^2$ scalar equations, and assume that through $k=1,...,n_x$, there are in total $q$ informative constraints (meaning that none of these constraints is a linear combination of the rest), denoted by
$$\left\{\begin{array}{l} f_1(z_1,...,z_{|\cal E|})=0\\
\quad \quad \vdots \\
f_q(z_1,...,z_{|\cal E|})=0, \end{array}\right.
$$
where each $f_i(z_1,...,z_{|\cal E|})$ is a polynomial of $(z_1,...,z_{|\cal E|})$ with real coefficients. These constraints are equivalent to $F(z_1,...,z_{|\cal E|})\doteq \sum\nolimits_{i=1}^qf_i^2(z_1,...,z_{|\cal E|})=0$. As $F(z_1,...,z_{|\cal E|})$ is a polynomial of $(z_1,...,z_{|\cal E|})$, if it is not identically zero, then for almost all values of $(z_1,...,z_{|\cal E|})$ except for the proper algebraic variety $\{(z_1,...,z_{|\cal E|})\in {\mathbb R}^{|\cal E|}:F(z_1,...,z_{|\cal E|})=0\}$ with Lebesgue measure zero  in ${\mathbb R}^{|\cal E|}$, $F(z_1,...,z_{|\cal E|})\ne 0$; otherwise, for all values of $(z_1,...,z_{|\cal E|})$ in ${\mathbb R}^{|\cal E|}$, $F(z_1,...,z_{|\cal E|})= 0$. This proves the proposed statement.
\end{proof}
Note that $\Phi_i$ and $\Phi_j$ are both obtained from $\Phi$ by zeroing entries of $W$ corresponding to ${\cal E}_i$ and ${\cal E}_j$ respectively. An immediate result from  Propositions \ref{algebraic_isolation} and \ref{generic_detect} is that, distinguishability of $(\Phi_i,Q)$ and $(\Phi_j,Q)$ is a generic property for networked system (\ref{sub_dynamic}), $i,j\in \{0,...,r\}$.
\begin{proposition}\label{generic_isolation} For networked system (\ref{sub_dynamic}),  isolability of a failure set ${\mathbb E}=\{{\cal E}_1,...,{\cal E}_r\}$ is a generic property.
\end{proposition}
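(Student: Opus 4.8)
The plan is to reduce the genericity of isolability of $\mathbb{E}$ to the genericity of pairwise distinguishability, which has already been recorded in the remark following Proposition \ref{generic_detect}. By Proposition \ref{algebraic_isolation}, $\mathbb{E}=\{\mathcal{E}_1,\dots,\mathcal{E}_r\}$ is isolable if and only if $(\Phi_i,Q)$ and $(\Phi_j,Q)$ are distinguishable for every (unordered) pair $i,j\in\{0,\dots,r\}$ with $i\neq j$; this is a finite collection of $\binom{r+1}{2}$ conditions. Thus isolability is the simultaneous conjunction of finitely many generic properties, and the whole task is to show that such a conjunction is again generic in the sense defined in this section.

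To make this precise I would, exactly as in the proof of Proposition \ref{generic_detect}, let $z_1,\dots,z_{|\mathcal{E}|}$ be the free parameters in $\{w_{ij}\}$, write $\Phi_i$ and $\Phi_j$ as (affine) functions of these parameters, so that $\Delta\Phi_{ij}=\Phi_i-\Phi_j$ is affine in them as well, and hence every $Q\Phi_i^{k-1}\Delta\Phi_{ij}$, $k=1,\dots,n_x$, is polynomial in $z_1,\dots,z_{|\mathcal{E}|}$. Collecting the informative scalar constraints these matrices impose and taking the sum of their squares, as in the proof of Proposition \ref{generic_detect}, yields for each pair a polynomial $F_{ij}(z_1,\dots,z_{|\mathcal{E}|})$ with real coefficients such that, by Theorem \ref{algebraic_theorme1}, $(\Phi_i,Q)$ and $(\Phi_j,Q)$ are distinguishable if and only if $F_{ij}(z_1,\dots,z_{|\mathcal{E}|})\neq 0$.

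Then I would split into two cases. If some $F_{ij}\equiv 0$, then for every weight realization $(\Phi_i,Q)$ and $(\Phi_j,Q)$ are not distinguishable, so by Proposition \ref{algebraic_isolation} $\mathbb{E}$ is isolable for no weight realization, i.e.\ the ``holds for none'' alternative in the definition of generic. Otherwise every $F_{ij}$ is a nonzero polynomial, hence its zero set $V_{ij}=\{(z_1,\dots,z_{|\mathcal{E}|})\in\mathbb{R}^{|\mathcal{E}|}:F_{ij}=0\}$ is a proper algebraic variety with Lebesgue measure zero in $\mathbb{R}^{|\mathcal{E}|}$, and the finite union $\bigcup_{i<j}V_{ij}$ still has measure zero; for any weight realization outside this union all pairs are distinguishable, so $\mathbb{E}$ is isolable for almost all weight realizations. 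In both cases isolability of $\mathbb{E}$ is generic. The only point requiring genuine care — as opposed to routine bookkeeping — is this dichotomy: a conjunction of properties each of which is individually ``true for almost all realizations or for none'' is not a priori of the same form, and the argument must show that one conjunct failing identically forces the whole property to fail identically, while otherwise one gets only a finite union of measure-zero exceptional sets; everything else follows from the fact that $\Phi_i,\Phi_j$, and hence the $F_{ij}$, are polynomial in the free weights.
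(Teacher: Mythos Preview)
Your proposal is correct and follows essentially the same route as the paper: reduce isolability to the finite conjunction of pairwise distinguishability via Proposition \ref{algebraic_isolation}, invoke the polynomial construction from the proof of Proposition \ref{generic_detect} for each pair, and conclude using that a finite union of proper algebraic varieties has Lebesgue measure zero. The paper compresses all of this into a single sentence, whereas you spell out the case split explicitly (some $F_{ij}\equiv 0$ versus all $F_{ij}\not\equiv 0$), but the substance is identical.
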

\begin{proof}By Proposition \ref{algebraic_isolation}, the statement follows from Proposition \ref{generic_detect} and the fact that the union of a finite number of proper algebraic varieties in ${\mathbb R}^{|\cal E|}$ also has Lebesgue measure zero  in ${\mathbb R}^{|\cal E|}$.
\end{proof}

\begin{example}[Genericity of Detectability and Isolability] Consider a networked system of single-integrators. Let $$\Phi=\left[
                                                                                                                            \begin{array}{cccc}
                                                                                                                              0 & 0 & 0 & a_1 \\
                                                                                                                              a_2 & 0 & a_3 & 0 \\
                                                                                                                              0 & 0 & 0 & a_4 \\
                                                                                                                              a_5 & 0 & 0 & 0 \\
                                                                                                                            \end{array}
                                                                                                                          \right],Q=[0, 0, 1, 0].$$Denote $Z=(a_1,...,a_5)$, and $\Phi_0\doteq \Phi$. Consider two failures ${\cal E}_1=\{(1,4)\}$ and ${\cal E}_2=\{(4,3)\}$.  We obtain $$Q(\lambda I- \Phi_0)^{-1}\Delta \Phi_{01}=[ -\frac{a_5(a_1a_2 + a_3a_4)}{-\lambda^3+ a_1a_5\lambda}, 0, 0, 0],$$ $$Q(\lambda I- \Phi_0)^{-1}\Delta \Phi_{02}=Q(\lambda I- \Phi_1)^{-1}\Delta \Phi_{12}=[0, 0, 0, -a_3a_4].$$ Hence, ${\cal E}_1$ is detectable in the set $\{Z\in {\mathbb R}^5: a_5(a_1a_2 + a_3a_4)\ne 0\}$. And $\{{\cal E}_1, {\cal E}_2\}$ is isolable in $\{Z\in {\mathbb R}^5: a_5(a_1a_2 + a_3a_4)\ne 0, a_3a_4\ne 0\}$. The complements of both sets are of zero Lebesgue measure in ${\mathbb R}^5$. \hfill $\square$

\end{example}

The above two propositions reveal that, it is the topology of the faultless networked system, rather than the exact weights of the subsystem links, that dominates detectability and isolability of a given failure (set). We say that a failure ${\cal E}_f$ is {\emph{generically detectable}}, if for almost all weight realizations of $\{w_{ij}\}$, ${\cal E}_f$ is detectable for the corresponding networked systems. Similarly, a failure set $\mathbb E$ is {\emph{generically isolable}}, if for almost all weight realizations of $\{w_{ij}\}$, $\mathbb E$ is isolable for the corresponding networked systems.
From Propositions \ref{generic_detect} and \ref{generic_isolation}, if there exists one weight realization for $\{w_{ij}\}$ such that a given failure is detectable for the corresponding system, then this failure is generically detectable for the networked systems. Such property holds true for generic isolability. 

%

\section{Graph-theoretic Conditions for Generic Detectability and Isolability} \label{fengwei}
In this section, graph-theoretic conditions for generic detectability and isolability of a failure (set) are given for the networked systems. 
\subsection{Conditions for Generic Detectability} \label{section_detection}
To present the conditions for generic detectability, we first introduce some definitions. For a failure ${\cal E}_f\subseteq {\cal E}$, let $V_{R}({\cal E}_f)$ denote the set of ending nodes of ${\cal E}_f$. Recall that $\cal S$ is the set of locations of sensors.  Define a distance index $d_{\min}$ of $\cal G$ as
$$d_{\min}=\min \limits_{v\in {V_{R}({\cal E}_f)}, u\in {\cal S}} {\rm dist}(v,u,{\cal G}). $$ That is, $d_{\min}$ is the shortest distance from the ending nodes of ${\cal E}_f$ to nodes that are directly measured (i.e., sensor nodes). For each subsystem, define a transfer function $H_s(\lambda)\doteq (\lambda I- A)^{-1}H$. Define a transfer index $r_{\max}$ for subsystems as
$$r_{\max}=\left\{ \begin{aligned} i, CH_s^i(\lambda)&\ne 0, CH_s^{i+1}(\lambda)=0, i\in {\mathbb N}, \\
\infty, CH_s^i(\lambda)&\ne 0, \forall i\in {\mathbb N}.\end{aligned}\right. $$
That is, $r_{\max}$ is the maximum exponent $i$ such that~$CH^i_s(\lambda)\ne 0$. To give conditions for generic detectability, we need the following {{intermediate}} results.
\begin{lemma}[\cite{K.J.1988Multivariable}] \label{distance} Let $M$ be an adjacency matrix of a digraph $\cal D$ with node set $\{1,...,N\}$. Then, i) $[M^k]_{ij}= 0$ if $k< {\rm dist}(j,i,{\cal D})$; ii) $[M^k]_{ij}\ne 0$ only if there is path from $j$ to $i$ with length $k$.
\end{lemma}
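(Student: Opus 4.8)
The plan is to prove both claims simultaneously by induction on $k$, exploiting the combinatorial interpretation of the $(i,j)$ entry of $M^k$ as a sum over walks of length $k$ from $j$ to $i$. First I would fix the convention used in the paper: $M_{ij}\ne 0$ only if $(j,i)\in{\cal E}$, so a single multiplication by $M$ corresponds to traversing one edge \emph{backwards} in index order, i.e. from $j$ to $i$. With this convention the expansion
$$[M^k]_{ij}=\sum_{i_1,\dots,i_{k-1}} M_{i i_{k-1}} M_{i_{k-1} i_{k-2}}\cdots M_{i_1 j}$$
shows that each nonzero summand forces the existence of edges $(j,i_1),(i_1,i_2),\dots,(i_{k-1},i)$ in ${\cal E}$, i.e. a path (walk) of length exactly $k$ from $j$ to $i$ in ${\cal D}$. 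This single observation already gives claim ii): if $[M^k]_{ij}\ne 0$ at least one summand is nonzero, hence there is a walk of length $k$ from $j$ to $i$, and any such walk contains a path, so in particular there is a $j$-to-$i$ path.

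For claim i), I would argue the contrapositive via the same walk expansion: if there is no walk of length $k$ from $j$ to $i$, then every summand above vanishes (each contains a zero factor), so $[M^k]_{ij}=0$. Then I observe that $k<{\rm dist}(j,i,{\cal D})$ precisely rules out any walk of length $k$ from $j$ to $i$ — because a walk of length $k$ would contain a path of length at most $k$, contradicting minimality of the distance (and if ${\rm dist}(j,i,{\cal D})=\infty$, there is no walk of any length, so in particular none of length $k$). Hence $[M^k]_{ij}=0$ whenever $k<{\rm dist}(j,i,{\cal D})$. Alternatively this can be packaged as a clean induction on $k$: the base case $k=1$ is immediate from the definition of the adjacency matrix, and the inductive step uses $M^k=M\cdot M^{k-1}$ together with $[M^k]_{ij}=\sum_\ell M_{i\ell}[M^{k-1}]_{\ell j}$, noting that a nonzero term requires $M_{i\ell}\ne 0$ (so $(\ell,i)\in{\cal E}$) and $[M^{k-1}]_{\ell j}\ne 0$ (so, by the induction hypothesis for ii), a $j$-to-$\ell$ path of length $k-1$), which concatenated with the edge $(\ell,i)$ yields the desired walk.

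The only real subtlety — and the step I would flag as the place to be careful — is the bookkeeping of the index/edge-direction convention, since this paper uses the nonstandard adjacency matrix where $W_{ij}$ weights the edge $(j,i)$ rather than $(i,j)$; getting the direction of the path ("from $j$ to $i$", matching ${\rm dist}(j,i,{\cal D})$) consistent with the matrix-product expansion is where an off-by-direction slip could creep in. A secondary point worth a sentence is the distinction between a walk and a path: $[M^k]_{ij}\ne 0$ yields a \emph{walk} of length exactly $k$, and one must invoke the elementary fact that a walk from $j$ to $i$ contains a path from $j$ to $i$ of length no greater; this is what lets claim i) be phrased in terms of ${\rm dist}$ and claim ii) in terms of path existence. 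Everything else is a routine expansion of the matrix power, so I would keep the writeup short.
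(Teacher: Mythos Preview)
Your argument is correct and is exactly the standard combinatorial proof of this fact. Note, however, that the paper does not supply its own proof of this lemma: it is simply stated and attributed to the cited textbook, so there is no ``paper's proof'' to compare against. One minor remark: in this paper ``path'' is defined merely as a sequence of consecutive edges (no distinct-vertex requirement), so what you call a walk is already a path in the paper's terminology; your careful walk-versus-path discussion is therefore more than is strictly needed here, though it does no harm.
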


\begin{lemma}\label{subsequence} Given $A,H\in {\mathbb R}^{n\times n}$, let $H_s(\lambda)=(\lambda I-A)^{-1}H$. Let $\{n_i\}_{i=1}^{i_{\max}}$ be any (infinite or finite) subsequence of $\{1,2,\cdots,\infty\}$. Then, there exists a dense set $\bar {\mathbf \Lambda}\subseteq {\mathbb C}$, such that when $\lambda \in \bar{\mathbf \Lambda}$, $I+\sum\nolimits_{i=1}^{i_{\max}}H^{n_i}_s(\lambda)$ is invertible.
\end{lemma}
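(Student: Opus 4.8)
The plan is to exploit the fact that the entries of $H_s(\lambda) = (\lambda I - A)^{-1}H$ are rational functions of $\lambda$, so that the determinant of $I + \sum_{i=1}^{i_{\max}} H_s^{n_i}(\lambda)$ is also a rational function of $\lambda$, and then argue that this rational function is not identically zero. Once we know $\det\bigl(I + \sum_i H_s^{n_i}(\lambda)\bigr) \not\equiv 0$ as a function of $\lambda$, its zeros (together with the poles of the entries, i.e. the eigenvalues of $A$) form a finite set, so the complement in $\mathbb{C}$ is not merely dense but cofinite, which certainly gives the claimed dense set $\bar{\mathbf\Lambda}$.

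First I would make the entrywise rationality precise: $(\lambda I - A)^{-1} = \mathrm{adj}(\lambda I - A)/\det(\lambda I - A)$, so every entry of $H_s^{n_i}(\lambda)$ is a ratio of polynomials whose denominator divides a power of $\det(\lambda I - A)$. Clearing denominators, $\det\bigl(I + \sum_{i=1}^{i_{\max}} H_s^{n_i}(\lambda)\bigr)\cdot \det(\lambda I - A)^{M} = g(\lambda)$ for a suitable integer $M$ and a polynomial $g$; hence it suffices to show this quantity does not vanish identically. The natural way to see this is to examine the behavior as $|\lambda| \to \infty$. For $|\lambda| > \rho(A)$ we have the Neumann expansion $(\lambda I - A)^{-1} = \sum_{k=0}^{\infty}\lambda^{-k-1}A^{k}$, so $H_s(\lambda) = \lambda^{-1}H + O(\lambda^{-2})$ and therefore $H_s^{n_i}(\lambda) = O(\lambda^{-n_i}) = O(\lambda^{-1})$ for every $i \ge 1$. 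Consequently $\sum_{i=1}^{i_{\max}} H_s^{n_i}(\lambda) \to 0$ as $|\lambda|\to\infty$ (when $i_{\max}=\infty$ one must check the sum converges uniformly for large $|\lambda|$, which follows from the geometric bound $\|H_s^{n_i}(\lambda)\| \le C\,(\|H\|/|\lambda|)^{n_i}$ once $|\lambda|$ is large enough that $\|H_s(\lambda)\| < 1$). Hence $I + \sum_i H_s^{n_i}(\lambda) \to I$, so its determinant tends to $1$; in particular the rational function $\det\bigl(I + \sum_i H_s^{n_i}(\lambda)\bigr)$ is not identically zero.

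It follows that $\det\bigl(I + \sum_{i=1}^{i_{\max}} H_s^{n_i}(\lambda)\bigr)$, being a nonzero rational function of $\lambda$, vanishes only on a finite subset of $\mathbb{C}$; excluding also the finitely many eigenvalues of $A$ (where the entries are undefined), we obtain a cofinite, hence dense, set $\bar{\mathbf\Lambda}$ on which $I + \sum_i H_s^{n_i}(\lambda)$ is well-defined and invertible.

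The main obstacle I anticipate is the case $i_{\max} = \infty$: one must justify that the infinite matrix series $\sum_{i=1}^{\infty} H_s^{n_i}(\lambda)$ converges (so that the expression, and its determinant, make sense) and that it converges to something with limit $I$ at infinity. This is handled by restricting attention to $|\lambda|$ large enough that $\|H_s(\lambda)\| \le 1/2$ (possible since $\|H_s(\lambda)\| \to 0$), whereupon $\|H_s^{n_i}(\lambda)\| \le 2^{-n_i}$ and the tail is dominated by a convergent geometric series uniformly in $\lambda$; analyticity of the limit on that region then lets one invoke the identity theorem to conclude the determinant is not identically zero on the whole domain of definition. Everything else is routine manipulation of resolvents and determinants.
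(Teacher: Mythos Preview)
Your proposal is correct and takes a genuinely different route from the paper. The paper's proof is a direct spectral argument: writing $\{\mu_k(\lambda)\}_{k=1}^n$ for the eigenvalues of $H_s(\lambda)$, the spectral mapping theorem gives that the eigenvalues of $I+\sum_i H_s^{n_i}(\lambda)$ are $1+\sum_i \mu_k(\lambda)^{n_i}$; one then bounds $\bigl|\sum_i \mu_k(\lambda)^{n_i}\bigr|\le \sum_i \rho(H_s(\lambda))^{n_i}<1$ as soon as $\rho(H_s(\lambda))$ is small enough, which a resolvent estimate shows occurs for all sufficiently large real $\lambda$. Your argument instead exploits the algebraic/analytic structure of $\lambda\mapsto \det\bigl(I+\sum_i H_s^{n_i}(\lambda)\bigr)$: rationality in the finite case, analyticity on a neighbourhood of infinity in the infinite case, together with the limit $1$ at infinity. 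This buys you a strictly sharper conclusion in the finite case (the good set is cofinite, not merely ``dense''), and it makes the logical structure cleaner since you never need to track how the individual eigenvalues of $H_s(\lambda)$ vary with $\lambda$. The paper's approach, on the other hand, is more elementary in that it avoids any complex analysis and reduces everything to a single spectral-radius bound. One shared subtlety worth flagging: in the case $i_{\max}=\infty$, both arguments really only establish invertibility on a neighbourhood of infinity, since that is where the series is known to converge; your invocation of the identity theorem gives isolated zeros there but does not by itself extend the conclusion to a set dense in all of $\mathbb C$. This is harmless for the application in the proof of Theorem~\ref{main_generic_dect}, which only needs $G_f(\lambda)\not\equiv 0$, i.e.\ nonvanishing at a single $\lambda$.
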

\begin{proof} Let $\{\lambda_k\}_{k=1}^n$ be the eigenvalues of $H_s(\lambda)$. Then, the eigenvalues of $I+\sum\nolimits_{i=1}^{i_{\max}}H^{n_i}_s(\lambda)$ are $\{1+\sum\nolimits_{i=1}^{i_{\max}}\lambda_k^{n_i}\}_{k=1}^n$. Hence, there exists some dense set $\bar {\mathbf \Lambda}$ such that $\rho(H_s(\lambda))$ is small enough if $\lambda \in \bar {\mathbf \Lambda}$,{\footnote{Let $\rho_{\min}(\cdot)$, $\sigma_{\min}(\cdot)$ and $\sigma_{\max}(\cdot)$ denote the minimum eigenvalue, minimum and maximum singular values, respectively. We have $\rho(H_s(\lambda))\le \sigma_{\max}(H_s(\lambda))\le \sigma_{\max}((\lambda I-A)^{-1})\sigma_{\max}(H)= \sigma^{-1}_{\min}(\lambda I-A)\sigma_{\max}(H)$. Note that $\sigma_{\min}(\lambda I-A)=\rho_{\min}^{\frac{1}{2}}((\lambda I-A)(\lambda I-A)^{\intercal})=\rho_{\min}^{\frac{1}{2}}(\lambda^2I-\lambda (A+A^{\intercal})+AA^{\intercal})\ge \rho^{\frac{1}{2}}_{\min}(\lambda^2I-\lambda (A+A^{\intercal}))\ge(\lambda^2- \lambda \rho(A+A^{\intercal}))^{\frac{1}{2}}$ when $\lambda\ge \frac{1}{2}\rho(A+A^{\intercal})$. Hence, when $\lambda$ is large enough, $\rho (H_s(\lambda))$ is small enough.}}  making ${\bf abs}(\sum\nolimits_{i=1}^{i_{\max}}\lambda_k^{n_i})\le \sum\nolimits_{i=1}^{i_{\max}} \rho(H_s(\lambda))^{n_i}<1$. Consequently, all eigenvalues of $I+\sum\nolimits_{i=1}^{i_{\max}}H^{n_i}_s(\lambda)$ are nonzero.
\end{proof}

\begin{theorem} \label{main_generic_dect} For networked system (\ref{sub_dynamic}) with known $(A,H,C)$ and zero-nonzero patterns of the weights $\{w_{ij}\}$, a failure ${\cal E}_f\subseteq {\cal E}$ is generically detectable, if and only if {\footnote{If $r_{\max}=\infty$ and $d_{\min}=\infty$, this inequality does not hold.}}
\begin{equation}\label{graphcondition} d_{\min}\le r_{\max}-1.\end{equation}
\end{theorem}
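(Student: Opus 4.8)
The plan is to recast generic detectability as the non-vanishing of the transfer matrix $G(\lambda)\doteq Q(\lambda I-\Phi)^{-1}\Delta\Phi$ viewed as a polynomial in the unknown weights $\{w_{ij}\}$, and to read off its graded pieces from the walk structure of $\cal G$. By Theorem \ref{algebraic_theorme1} together with Proposition \ref{generic_detect}, ${\cal E}_f$ is generically detectable iff $G(\lambda)\not\equiv 0$ for some (equivalently generic) weight realization, where $\Delta\Phi=\Phi-\bar\Phi=\Delta W\otimes H$ and $\Delta W\doteq W-\bar W$ has its nonzero entries exactly at the links of ${\cal E}_f$. The preliminary identity is the factorization $\lambda I-\Phi=(I_N\otimes(\lambda I-A))(I-W\otimes H_s(\lambda))$ with $H_s(\lambda)=(\lambda I-A)^{-1}H$, whence $G(\lambda)=(S\otimes C)(I-W\otimes H_s(\lambda))^{-1}(\Delta W\otimes H_s(\lambda))$. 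Since $\rho(W\otimes H_s(\lambda))=\rho(W)\rho(H_s(\lambda))\to 0$ as $|\lambda|\to\infty$ (by the bound in the footnote of Lemma \ref{subsequence}), a Neumann expansion valid for large real $\lambda$, continued rationally, gives
$$G(\lambda)=\sum\nolimits_{k=0}^{\infty}\big(SW^k\Delta W\big)\otimes\big(CH_s^{k+1}(\lambda)\big).$$

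The key structural observation is that $SW^k\Delta W$ is homogeneous of degree $k+1$ in the indeterminates $\{w_{ij}\}$, whereas $CH_s^{k+1}(\lambda)$ contains no weight. Grading $G(\lambda)$ by total degree in the weights therefore decouples the sum: its degree-$d$ part is exactly $(SW^{d-1}\Delta W)\otimes(CH_s^{d}(\lambda))$, so $G(\lambda)\not\equiv 0$ iff there is some $d\ge 1$ with $SW^{d-1}\Delta W\ne 0$ and $CH_s^{d}(\lambda)\ne 0$. Now $CH_s^{d}(\lambda)\ne 0$ iff $d\le r_{\max}$, by the definition of $r_{\max}$ (and since $CH_s^{i}=0\Rightarrow CH_s^{i+1}=0$). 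As for the other factor: for $i\in{\cal S}$, $[W^k\Delta W]_{ij}=\sum_b[W^k]_{ib}[\Delta W]_{bj}$, and $[\Delta W]_{bj}\ne 0$ forces $b\in V_R({\cal E}_f)$, whence $[W^k]_{ib}=0$ by Lemma \ref{distance}(i) when $k<{\rm dist}(b,i,{\cal G})$; thus $SW^k\Delta W=0$ for $k<d_{\min}$. On the other hand, choosing $i^*\in{\cal S}$, $b^*\in V_R({\cal E}_f)$ with ${\rm dist}(b^*,i^*,{\cal G})=d_{\min}$ and a failed link $(a^*,b^*)\in{\cal E}_f$, the entry $[W^{d_{\min}}\Delta W]_{i^*a^*}=\sum_{b:(a^*,b)\in{\cal E}_f}[W^{d_{\min}}]_{i^*b}\,w_{ba^*}$ is a nonzero polynomial, because $[W^{d_{\min}}]_{i^*b^*}$ contains, with positive coefficient, the monomial of a shortest $b^*$-to-$i^*$ path, and the distinct link variables $w_{ba^*}$ forbid cancellation among different $b$; hence $SW^{d_{\min}}\Delta W\ne 0$.

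Putting this together, $G(\lambda)\not\equiv 0$ iff the smallest $d$ with $SW^{d-1}\Delta W\ne 0$, namely $d=d_{\min}+1$, obeys $d\le r_{\max}$, i.e.\ iff $d_{\min}\le r_{\max}-1$. (When $d_{\min}=\infty$ one has $SW^{k}\Delta W=0$ for all $k$, so $G\equiv 0$, consistent with (\ref{graphcondition}) being declared false; when $r_{\max}=\infty$ only the weight side matters.) If the condition fails then $G(\lambda)\equiv 0$ for every realization, so by Theorem \ref{algebraic_theorme1} the failure is detectable for no realization; if it holds then $G(\lambda)$ is a nonzero polynomial in $\{w_{ij}\}$, hence nonzero for generic weights, so ${\cal E}_f$ is generically detectable.

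The step I expect to be most delicate is ruling out cancellation in the expansion of $G(\lambda)$ between contributions of propagation lengths $d_{\min},d_{\min}+1,\dots$ --- this is also why a naive ``leading term as $\lambda\to\infty$'' argument fails, since $CH^{d_{\min}+1}$, the leading Laurent coefficient of $CH_s^{d_{\min}+1}(\lambda)$, may vanish even when $CH_s^{d_{\min}+1}(\lambda)\not\equiv 0$, so it could interact with later terms. Grading by degree in the unknown weights rather than in $\lambda$ removes this difficulty at a stroke, and one needs only the elementary fact that each $[W^k]_{ij}$ has nonnegative coefficients so the shortest-path contribution does not self-cancel; what remains is to check that this counting is robust to degenerate configurations (self-loops inside ${\cal E}_f$, the failed link lying on a shortest $b^*$-to-$i^*$ path, or $i^*\in V_R({\cal E}_f)$ so that $d_{\min}=0$), which it is.
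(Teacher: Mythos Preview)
Your proof is correct, and the sufficiency argument takes a genuinely different route from the paper's. Both proofs share the Neumann expansion
\[
G(\lambda)=\sum_{k\ge 0}\big(SW^{k}\Delta W\big)\otimes CH_s^{k+1}(\lambda),
\]
and the necessity direction (showing $SW^{k}\Delta W=0$ for $k<d_{\min}$ via Lemma \ref{distance}) is essentially identical. The divergence is in how one shows $G\not\equiv 0$ when $d_{\min}\le r_{\max}-1$.

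The paper argues by \emph{constructing} a specific weight realization: it zeros all weights except those along a shortest path from $V_R({\cal E}_f)$ to $\cal S$ and one failed link, then splits into two cases according to whether the failed link's tail lies on that path, and in Case~2 invokes Lemma~\ref{subsequence} to guarantee invertibility of a matrix of the form $I+\sum_i H_s^{n_i}(\lambda)$. Your argument instead \emph{grades} $G(\lambda)$ by total degree in the indeterminates $\{w_{ij}\}$: since $SW^{k}\Delta W$ is homogeneous of degree $k+1$ while $CH_s^{k+1}(\lambda)$ carries no weights, the series decouples degree-by-degree, and nonvanishing of the single graded piece at $d=d_{\min}+1$ suffices. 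This is both shorter and more conceptual: it removes the explicit realization, the two-case split, and the need for Lemma~\ref{subsequence} altogether. What the paper's construction buys in return is an explicit witness realization (albeit with some nominally nonzero weights set to zero, which is harmless by genericity).

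One small remark: your sentence ``the distinct link variables $w_{ba^*}$ forbid cancellation among different $b$'' is not literally the reason---$[W^{d_{\min}}]_{i^*b}$ may itself contain some $w_{b'a^*}$ if $a^*$ lies on a walk from $b$ to $i^*$. The correct (and simpler) justification is exactly the one you give in your closing paragraph: every entry of $W^{k}\Delta W$ is a polynomial with \emph{nonnegative} integer coefficients (each monomial counts a walk), so any single nonzero walk monomial already certifies $SW^{d_{\min}}\Delta W\ne 0$. With that clarification, the argument is complete and covers all the boundary cases ($d_{\min}=0$, $d_{\min}=\infty$, $r_{\max}=0$, failed links on the shortest path) uniformly.
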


\begin{proof} Note that Condition (3) of Theorem \ref{sub_dynamic} can be used to prove this theorem. We first derive a formula which is used for proving both necessity and sufficiency. Let $\bar W$ be the adjacency matrix of $({\cal V}, {\cal E}\backslash {\cal E}_f)$. Recall that $\bar \Phi=I_N\otimes A+ \bar W \otimes H$. Define $\Delta W\doteq W- \bar W$. Then, $\Delta \Phi = \Phi-\bar \Phi =\Delta W \otimes H$. The corresponding transfer function becomes
\[\begin{aligned}
&G_f(\lambda)\doteq Q(\lambda I-\Phi)^{-1}\Delta \Phi \\
&=S\otimes C(\lambda I_{n_x}- I_N\otimes A- W\otimes H)^{-1}\Delta W\otimes H.\end{aligned}\]{Noting that $(\lambda I_{n_x}- I_N\otimes A- W\otimes H)=I_N\otimes(I-A)[I_{n_x}-I_N\otimes(I-A)^{-1}W\otimes H]$, we have\begin{equation}\begin{aligned}
G_f(\lambda)&=S\otimes C[I_{n_x}-W\otimes(\lambda I-A)^{-1} H]^{-1}I_N\otimes (\lambda I-A)^{-1}\\ &\cdots \Delta W\otimes H.\\
\end{aligned}\end{equation}}When $\lambda\in {\mathbf{\Lambda}}\doteq \{\lambda \in {\mathbb  C}: \rho(W)\rho((\lambda I-A)^{-1} H)<1\}$, which is dense in $\mathbb C$, rewrite $G_f(\lambda)$ as
\begin{equation}\label{basic_exp2} \begin{aligned}
&G_f(\lambda)=S\otimes C\sum \limits_{k=0}^{\infty}[W\otimes(\lambda I-A)^{-1} H]^{k}\Delta W \otimes(\lambda I-A)^{-1}H\\
&=\sum \limits_{k=0}^{r_{\max}} (S\otimes C)W^k\otimes [(\lambda I-A)^{-1} H]^{k}\Delta W \otimes(\lambda I-A)^{-1}H\\
&= \sum \limits_{k=0}^{r_{\max}-1} SW^k\Delta W\otimes C[(\lambda I-A)^{-1} H]^{k+1}.\\
\end{aligned}
\end{equation}

{\emph{Necessity:}} Suppose that (\ref{graphcondition}) is not true. Then, either i) $d_{\min}=0$, $r_{\max}=0$ or ii) $d_{\min}\ge 1$, and $d_{\min}>r_{\max}-1$.
In case i), as $r_{\max}=0$, we have $C[(\lambda I-A)^{-1} H]^k=0$ for $k\ge 1$. Hence, $G_f(\lambda)=0$ for $\lambda \in {\mathbf\Lambda}$, which means $G_f(\lambda)\equiv 0$, leading to the undetectability of ${\cal E}_f$. In case ii), without losing generality, suppose that the sensor nodes are indexed  as $1,...,|\cal S|$, and the ending nodes of failure ${\cal E}_f$ as $q+1,...,N$, $q\ge |\cal S|$. Then, if $k<d_{\min}$, we have the following partitions:
\begin{equation}\label{partition} S=[I_{|{\cal E}|},0,0], W^k\!\!=\!\! \left[
                                \begin{array}{ccc}
                                  W^k_{11} & W^k_{12} & 0 \\
                                  W^k_{21} & W^k_{22} & W^k_{23} \\
                                  W^k_{31} & W^k_{32} & W^k_{33} \\
                                \end{array}
                              \right], \Delta W\!\!=\!\!\left[\!
                                                  \begin{array}{c}
                                                    0 \\
                                                    0 \\
                                                    \Delta W_3 \\
                                                  \end{array}
                                                \!\right]
\end{equation}where $W^k_{11}$, $W^k_{22}$, $W^k_{33}$, and $\Delta W_3$ have dimensions respectively $|{\cal S}|\times |{\cal S}|$, $(q-|{\cal S}|)\times (q-|{\cal S}|)$, $(N-q)\times (N-q)$, and $(N-q)\times N$, and the rest have compatible dimensions. Note that the $(1,3)$th block of $W^k$ is zero due to Lemma \ref{distance} and the fact that $k<d_{\min}$. It is easy to see that
\begin{equation}\label{necessity}SW^k\Delta W=0, \forall k\in \{0,...,d_{\min}\}.\end{equation}
Hence, $G_f(\lambda)=0$ for $\lambda \in {\mathbf \Lambda}$ from (\ref{basic_exp2}), making $G_f(\lambda)\equiv 0$. Thus, ${\cal E}_f$ is always undetectable.

Sufficiency: By genericity of detectability, to show sufficiency it is enough to construct a weight realization $\{w_{ij}\}$ associated with which ${\cal E}_f$ is detectable. By reordering nodes, suppose that the {\emph{shortest}} path from $V_R({\cal E}_f)$ to $\cal S$ in $\cal G$ is ${\cal P}\doteq \{(\bar d,d_{\min}), (d_{\min}-1,d_{\min}-2),\cdots,(2,1)\}$, and $(i^*,\bar d)\in {\cal E}_f$, where $\bar d\doteq d_{\min}+1$. Let the weights of links in ${\cal E}\backslash (\{(i^*,\bar d)\} \cup {\cal P})$ be zero (then links in ${\cal E}_f\backslash \{(i^*,\bar d)\}$ have zero weights), whiles links in $\{(i^*,\bar d)\} \cup {\cal P}$ have weight $1$. Then,
\begin{equation}\label{construction}W=e^{[N]}_{\bar d,i^*}+\sum \limits_{i=1}^{d_{\min}} e^{[N]}_{i,i+1}, \Delta W= e^{[N]}_{\bar d,i^*}.\end{equation} We consider two cases. See Fig. \ref{example_proof}.

Case 1), $i^*>d_{\min}+1$ (Fig. \ref{subeproofa}). Without losing generality, let $i^*=d_{\min}+2$. Since each nonzero entry of $W$ is $1$, from Lemma \ref{distance}, $[W^{k}]_{1\bar d}=0$ for $k\in\{0,1,\cdots\}\backslash \{d_{\min}\}$, and $[W^{d_{\min}}]_{1\bar d}=1$.
 Hence, considering ${\cal S}=\{1\}$, we have $SW^k\Delta W= [e^{[N]}_1]^{\intercal}W^ke^{[N]}_{\bar d,i^*}=0$ if $k\in\{0,1,\cdots,\infty\}\backslash \{d_{\min}\}$, and $SW^k\Delta W=[e^{[N]}_{i^*}]^{\intercal }$ if $k=d_{\min}$.
 Consequently, $G_f(\lambda)=[e^{[N]}_{i^*}]^{\intercal}\otimes CH_s(\lambda)^{d_{\min}+1}\ne 0$ from (\ref{basic_exp2}), making ${\cal E}_f$ detectable.

 Case 2), $i^*\in \{1,...,d_{\min}+1\}$ (Fig. \ref{subeproofb}).  In this case, from Lemma \ref{distance}, we have
 $$[W^{k}]_{1\bar d}=\left\{\begin{array}{l} 1, {\text{ if there is a path from $\bar d$ to $1$ with length}} \ k  \\ 0, {\text{otherwise}}\end{array} \right.$$
Suppose $[W^{k}]_{1\bar d}=1$ for $k\in \{n_1,n_2,...,\infty\}$, where $n_1=d_{\min}$. Considering ${\cal S}=\{1\}$,  we have $SW^k\Delta W=[e^{[N]}_1]^{\intercal}W^ke^{[N]}_{\bar d,i^*}=[e^{[N]}_{i^*}]^{\intercal}$ for $k\in \{n_1,n_2,...,\infty\}$, and otherwise $SW^k\Delta W=0$. Substituting these into (\ref{basic_exp2}), we get
$$
\begin{aligned} G_f(\lambda)&= [e^{[N]}_{i^*}]^{\intercal}\otimes \left\{C \sum \limits_{i=1}^{k_{\max}} [(\lambda I-A)^{-1} H]^{n_i+1}\right\}\\
&=[e^{[N]}_{i^*}]^{\intercal}\otimes CH^{d_{\min}+1}_s(\lambda)\left\{I+\sum \limits_{i=2}^{k_{\max}}H^{n_i-n_1}_s(\lambda)\right\},
\end{aligned}$$where $k_{\max}\doteq \max \{k: n_k\le r_{\max}-1\}$. As (\ref{graphcondition}) holds, the value of $\lambda$ making $CH^{d_{\min}+1}_s(\lambda)\ne 0$ is everywhere dense in $\mathbb C$. Together with Lemma \ref{subsequence}, we know there exists a dense set $\hat {\mathbf \Lambda} \subseteq \mathbb C$, such that for $\lambda \in \hat {\mathbf \Lambda}$,   $CH^{d_{\min}+1}_s(\lambda)\ne 0$ and $I+\sum \nolimits_{i=2}^{k_{\max}}H^{n_i-n_1}_s(\lambda)$ is invertible, making  $G_f(\lambda)\ne 0$. This proves the detectability of ${\cal E}_f$ by Theorem \ref{algebraic_theorme1}.
\end{proof}

Theorem \ref{main_generic_dect} gives a graph-theoretic condition for generic failure detectability.  Notice that $H_s(\lambda)$ is a transfer function from the internal input to the internal output of a subsystem. A deep insight of Theorem \ref{main_generic_dect} indicates that, the necessary and sufficient condition for generic detectability of failure ${\cal E}_f$ is that, at least one sensor should receive signals from at least one ending node of the faulty links. 

When $A=0\in{\mathbb R}^{1\times 1}, B=\Gamma=C=1$, system (\ref{sub_dynamic}) collapses to a networked system of single-integrators, or alternatively speaking, the conventional {\emph{structured system}} where every entry in the system matrices is either fixed zero or a free parameter \cite{generic}. In this case, $r_{\max}=\infty$. Theorem \ref{main_generic_dect} immediately leads to the following result.
\begin{corollary}[Generic detectability for structured system]\label{single_failure} For a networked system of single-integrators (or a structured system), a failure ${\cal E}_f$ is generically detectable, if and only if there exists a path from one ending node of ${\cal E}_f$ to one of the sensor nodes in $\cal G$.
\end{corollary}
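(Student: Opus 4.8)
The plan is to obtain this as a direct specialization of Theorem~\ref{main_generic_dect}. First I would substitute $A=0\in{\mathbb R}^{1\times 1}$ and $B=\Gamma=C=1$ into the relevant definitions. Then $H=B\Gamma=1$, and the subsystem transfer function becomes $H_s(\lambda)=(\lambda I-A)^{-1}H=\lambda^{-1}$, so that $CH_s^i(\lambda)=\lambda^{-i}$ is a non-identically-zero rational function for every $i\in{\mathbb N}$. By the definition of the transfer index $r_{\max}$, this forces $r_{\max}=\infty$. Since the failure set ${\cal E}_f$ is nonempty, $V_R({\cal E}_f)\ne\emptyset$ and the distance index $d_{\min}=\min_{v\in V_R({\cal E}_f),\,u\in{\cal S}}{\rm dist}(v,u,{\cal G})$ is well-defined (possibly $+\infty$).

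Next I would interpret the graph-theoretic inequality (\ref{graphcondition}) under $r_{\max}=\infty$. Recalling the convention in the footnote to Theorem~\ref{main_generic_dect} (the inequality fails when $d_{\min}=r_{\max}=\infty$), the condition $d_{\min}\le r_{\max}-1$ holds if and only if $d_{\min}<\infty$, i.e.\ if and only if there is a finite-length directed path in ${\cal G}$ from some ending node of ${\cal E}_f$ to some sensor node in ${\cal S}$. Invoking Theorem~\ref{main_generic_dect}, which asserts that generic detectability of ${\cal E}_f$ is equivalent to (\ref{graphcondition}), chains these equivalences together and yields the claim.

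I do not anticipate a genuine obstacle; the only points requiring care are confirming $r_{\max}=\infty$ in this degenerate scalar setting and correctly applying the convention for the inequality when $r_{\max}=\infty$. As a sanity check that avoids leaning on that convention, one can redo the expansion (\ref{basic_exp2}) directly: here $C[(\lambda I-A)^{-1}H]^{k+1}=\lambda^{-(k+1)}$, so $G_f(\lambda)=\sum_{k\ge 0}\lambda^{-(k+1)}\,SW^k\Delta W$, which is not identically zero precisely when $SW^k\Delta W\ne 0$ for some $k$. By Lemma~\ref{distance}, $SW^k\Delta W\ne 0$ for some $k$ if and only if there is a walk (hence a path) from an ending node of ${\cal E}_f$ to a sensor node, recovering exactly the stated condition via Theorem~\ref{algebraic_theorme1}.
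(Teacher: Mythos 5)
Your proposal is correct and takes essentially the same route as the paper, which likewise notes that $r_{\max}=\infty$ for single-integrator subsystems (since $CH_s^i(\lambda)=\lambda^{-i}\not\equiv 0$ for all $i$) and then reads the corollary off Theorem~\ref{main_generic_dect} as $d_{\min}<\infty$. Your supplementary sanity check via the expansion of $G_f(\lambda)$ is just an unwinding of that theorem's proof in the scalar case and changes nothing essential.
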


\begin{example}\label{example2}
Consider a networked system with $5$ subsystems. The parameters for subsystem dynamics are respectively $$A=\left[
                                                                                                      \begin{array}{ccc}
                                                                                                        1 & -1 & 0 \\
                                                                                                        0 & 2 & 0 \\
                                                                                                        0 & 0 & -1 \\
                                                                                                      \end{array}
                                                                                                    \right]
,H=\left[
        \begin{array}{ccc}
          0 & 2 & 0 \\
          0 & 0 & 1 \\
          0 & 0 & 0 \\
        \end{array}
      \right]
,C=[1,0,0].$$The sensor is located at node $1$, i.e., ${\cal S}=\{1\}$.  The network topology is shown in Fig. \ref{example_5nodes}. For this networked system, $C[(\lambda I-A)^{-1}H]^2\ne 0$ whiles $C[(\lambda I-A)^{-1}H]^3= 0$. Hence, $r_{\max}=2$. For the link failure $\{(1,2)\}$, $d_{\min}=2> r_{\max}-1$. From Theorem \ref{main_generic_dect}, failure $\{(1,2)\}$ is undetectable irrespective of weights of these links. This can be validated by the algebraic conditions in Theorem \ref{algebraic_theorme1} when any exact weights are given.

On the other hand, for the failure $\{(2,5)\}$, $d_{\min}=1=r_{\max}-1$. From Theorem \ref{main_generic_dect}, this failure is generically detectable, which can be validated using Theorem \ref{algebraic_theorme1} on randomly generated weights. In fact, among the single-link failures, only the failure $\{(2,5)\}$, $\{(4,5)\}$, or $\{(5,1)\}$ can be detectable (see Fig. \ref{example_5nodes}). \hfill $\square$
\end{example}

\begin{figure}
  \centering
  \subfigure[]{ \label{subeproofa}
  \includegraphics[width=2.3in]{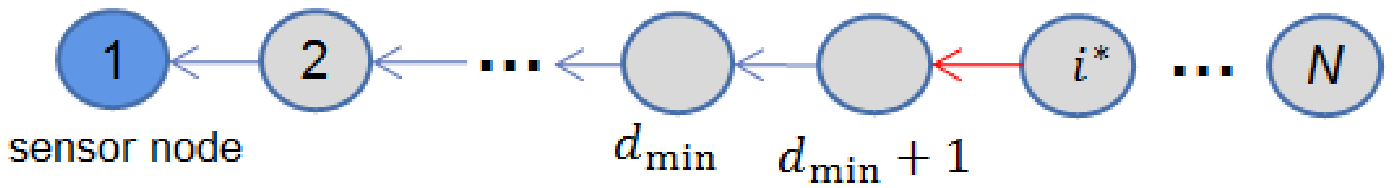}} \hspace{1in}
  \subfigure[]{\label{subeproofb}
   \includegraphics[width=2.3in]{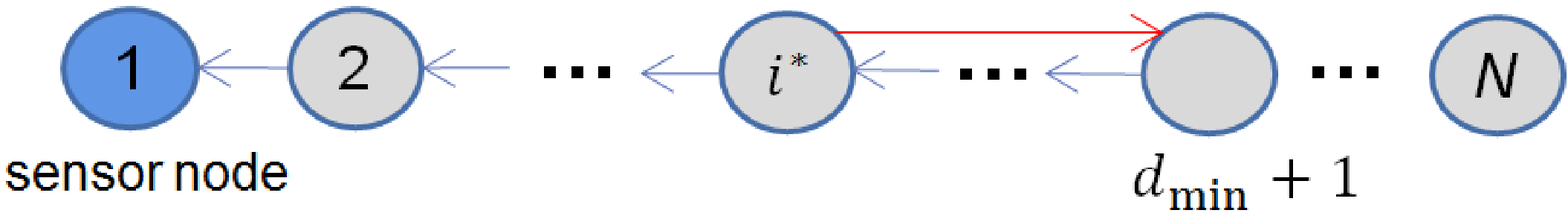}}
  \caption{Network topology in the proof of Theorem \ref{main_generic_dect}. (a): case 1). (b): case~2). Links in red are faulty.}\label{example_proof}
\end{figure}

\begin{figure}
  \centering
  \subfigure[]{ \label{subexpa}
  \includegraphics[width=1in]{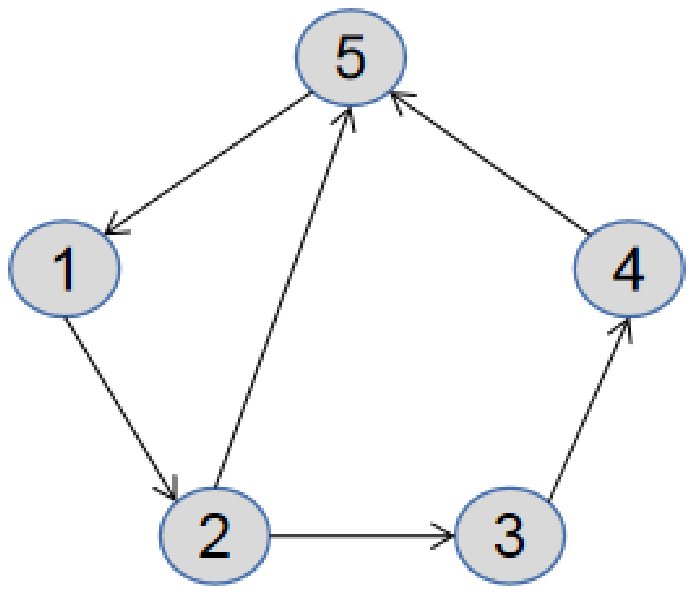}} \hspace{1in}
  \subfigure[]{\label{subexpb}
   \includegraphics[width=1in]{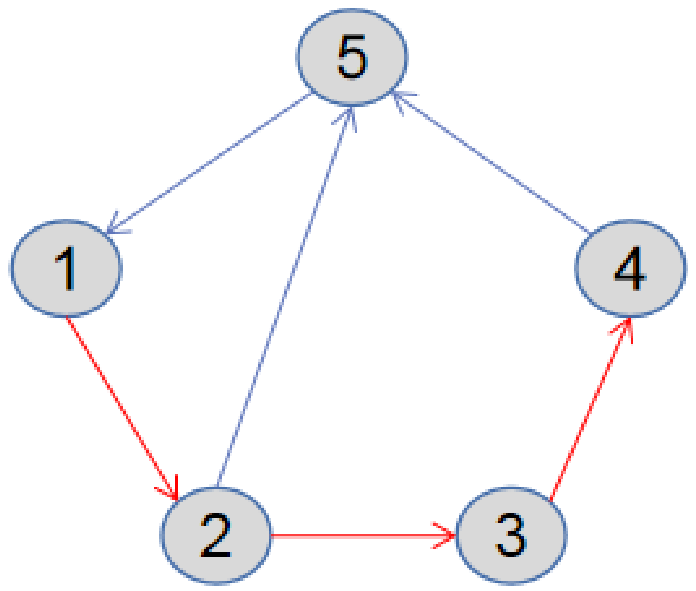}}
  \caption{Network topology in Example \ref{example2}. (a): original network topology. (b): links in blue are detectable, and in red are undetectable.}\label{example_5nodes}
\end{figure}

\subsection{Conditions for Generic Isolability}
Consider a prior failure set ${\mathbb E}=\{{\cal E}_1,...,{\cal E}_r\}$. For each $i\in \{1,...,r\}$, let $W_i$ be the adjacency matrix for ${\cal G}_i\doteq ({\cal V}, {\cal E}\backslash {\cal E}_i)$, which is defined in the same way as $\bar W$ for $\bar {\cal G}$. For $i,j\in \{0,...,r\}$ with $i\ne j$, define $\Delta W_{ij}=W_i-W_j$, and ${\cal E}_{ij}={\cal E}_{i}\cup {\cal E}_{j}\backslash ({\cal E}_i\cap {\cal E}_j)$. That is, ${\cal E}_{ij}$ is the link set of the digraph whose adjacency matrix is $\Delta W_{ij}$, reflecting the difference between ${\cal G}_i$ and ${\cal G}_j$. Moreover, define the distance index $d_{ij}$ as
$$d_{ij}=\min \limits_{v\in {V_{R}({\cal E}_{ij})}, u\in {\cal S}} {\rm dist}(v,u,{{\cal G}_i}). $${{It is easy to verify from the definition that $d_{ij}=d_{ji}$. }}Define $$d_{\min}^{\mathbb E}=\max\limits_{0\le i < j\le r}\{d_{ij}\}. $$

Before giving conditions for generic isolability, we present the condition for generic distinguishability of $\Phi_i$ and $\Phi_j$, recalling that they are lumped state transition matrices of the networked system after failures ${\cal E}_i$ and ${\cal E}_j$, respectively.
\begin{proposition}\label{generic_distinguish}  For networked system (\ref{sub_dynamic}), $\Phi_i$ and $\Phi_j$ are generically distinguishable, if and only if $d_{ij}\le r_{\max}-1$.
\end{proposition}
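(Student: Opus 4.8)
The plan is to run the proof of Theorem \ref{main_generic_dect} almost verbatim, with the nominal data $(\Phi,\bar\Phi,W,\bar W,\Delta W,V_R(\mathcal E_f),\mathcal G)$ replaced throughout by $(\Phi_i,\Phi_j,W_i,W_j,\Delta W_{ij},V_R(\mathcal E_{ij}),\mathcal G_i)$. By condition (3) of Theorem \ref{algebraic_theorme1} applied to the pair $(\Phi_i,\Phi_j)$ (cf.\ the discussion before Proposition \ref{algebraic_isolation}), $\Phi_i$ and $\Phi_j$ are distinguishable if and only if $G_{ij}(\lambda)\doteq Q(\lambda I-\Phi_i)^{-1}\Delta\Phi_{ij}\not\equiv 0$, where $\Delta\Phi_{ij}=\Phi_i-\Phi_j=\Delta W_{ij}\otimes H$. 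First I would record one structural fact about $\Delta W_{ij}$: for \emph{every} weight realization, an edge that survives in both $\mathcal G_i$ and $\mathcal G_j$ carries the same weight in $W_i$ and in $W_j$ and hence cancels, so the support of $\Delta W_{ij}$ is contained in the positions indexed by $\mathcal E_{ij}$ (generically it equals $\mathcal E_{ij}$); in particular the nonzero rows of $\Delta W_{ij}$ lie in $V_R(\mathcal E_{ij})$. Repeating the manipulations leading to (\ref{basic_exp2}) with $W\mapsto W_i$ and $\Delta W\mapsto\Delta W_{ij}$ then gives, on the dense set $\{\lambda:\rho(W_i)\rho((\lambda I-A)^{-1}H)<1\}$,
\[ G_{ij}(\lambda)=\sum_{k=0}^{r_{\max}-1}SW_i^k\Delta W_{ij}\otimes C[(\lambda I-A)^{-1}H]^{k+1}. \]

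Necessity: suppose $d_{ij}>r_{\max}-1$ (interpreted, as in the footnote of Theorem \ref{main_generic_dect}, to include $d_{ij}=\infty$). Then every $k$ in the sum satisfies $k\le r_{\max}-1<d_{ij}\le{\rm dist}(v,u,\mathcal G_i)$ for all $u\in\mathcal S$ and $v\in V_R(\mathcal E_{ij})$, so Lemma \ref{distance}(i) gives $[W_i^k]_{uv}=0$; combined with the row-support of $\Delta W_{ij}$ this forces $SW_i^k\Delta W_{ij}=0$ for every such $k$, hence $G_{ij}\equiv 0$ for \emph{every} weight realization. Thus $\Phi_i$ and $\Phi_j$ are never distinguishable, so not generically distinguishable.

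Sufficiency: suppose $d_{ij}\le r_{\max}-1$. Since distinguishability of $(\Phi_i,Q)$ and $(\Phi_j,Q)$ is already known to be a generic property (remark after Proposition \ref{generic_detect}), it suffices to exhibit one weight realization with $G_{ij}\not\equiv0$. Choose $v^*\in V_R(\mathcal E_{ij})$ and $u^*\in\mathcal S$ attaining $d_{ij}$, a shortest path $\mathcal P$ from $v^*$ to $u^*$ in $\mathcal G_i$, and an edge $e^*=(i^*,v^*)\in\mathcal E_{ij}$. A small but useful observation is that $\mathcal P$ contains no edge of $\mathcal E_{ij}$: the head of such an edge would again lie in $V_R(\mathcal E_{ij})$ while being strictly closer to $\mathcal S$ in $\mathcal G_i$, contradicting the minimality of $d_{ij}$; hence every edge of $\mathcal P$ survives in both $\mathcal G_i$ and $\mathcal G_j$. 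Now assign weight $1$ to the edges of $\mathcal P\cup\{e^*\}$ and weight $0$ to all remaining edges of $\mathcal E$. The edges of $\mathcal P$ then cancel in $\Delta W_{ij}$, leaving $\Delta W_{ij}=\sigma e^{[N]}_{v^*,i^*}$ with $\sigma\in\{1,-1\}$, while $W_i$ is the adjacency matrix of $\mathcal P$ together with $e^*$ when $e^*\notin\mathcal E_i$ (and of $\mathcal P$ alone when $e^*\in\mathcal E_i$). This is exactly the configuration treated in the sufficiency part of Theorem \ref{main_generic_dect}: if $e^*$ is absent from $W_i$ the path is acyclic and $G_{ij}(\lambda)=\sigma[e^{[N]}_{i^*}]^{\intercal}\otimes CH_s^{d_{ij}+1}(\lambda)\not\equiv0$, since $d_{ij}+1\le r_{\max}$; if $e^*$ is present, the two cases of that proof---according to whether $e^*$ closes a cycle with $\mathcal P$---carry over unchanged, the cyclic case being handled via Lemma \ref{subsequence}. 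In every case $G_{ij}\not\equiv0$, so $\Phi_i$ and $\Phi_j$ are generically distinguishable.

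Beyond quoting Theorem \ref{main_generic_dect}, the only genuine work---and what I expect to be the main (though modest) obstacle---is the bookkeeping of $\Delta W_{ij}=W_i-W_j$: one must check that edges common to $\mathcal G_i$ and $\mathcal G_j$ cancel identically so that the row-support bound used in necessity holds for \emph{all} realizations, and that a shortest path attaining $d_{ij}$ avoids $\mathcal E_{ij}$, so that under the chosen realization $\Delta W_{ij}$ collapses to a single nonzero entry and the construction of Theorem \ref{main_generic_dect} applies without change.
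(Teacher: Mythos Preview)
Your proposal is correct and follows essentially the same route as the paper: the paper's own proof simply says to rerun the argument of Theorem~\ref{main_generic_dect} with $W_i$ in place of $W$ and $\Delta W_{ij}$ in place of $\Delta W$, noting that the only new wrinkle is that the distinguished edge $e^{[N]}_{\bar d,i^*}$ may fail to appear in $W_i$, and asserts this does not disturb the argument. Your write-up makes exactly this substitution, and the extra bookkeeping you supply---that the row-support of $\Delta W_{ij}$ lies in $V_R(\mathcal E_{ij})$ for all realizations, that a shortest path realizing $d_{ij}$ necessarily avoids $\mathcal E_{ij}$, and the clean split into the cases $e^*\in\mathcal E_i$ versus $e^*\notin\mathcal E_i$---is precisely what the paper leaves as ``an easy manner to validate.''
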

\begin{proof}
By regarding $W_i$ as $W$ and $\Delta W_{ij}$ as $\Delta W$, the proof follows similar arguments to that of Theorem \ref{main_generic_dect}. The only difference lies in that $\Delta W_{ij}$ may contain some nonzero entries which do not appear in $W_i$ (see (\ref{partition}) and (\ref{construction}) respectively). In the proof for necessity,  this difference does not violate (\ref{necessity}), as the corresponding partitions like (\ref{partition}) still hold. In the proof for sufficiency, such difference leads to that $W$ may not contain $e^{[N]}_{\bar d,i^*}$ in (\ref{construction}). It is an easy manner to validate such difference does not violate the validness of the remaining arguments.
\end{proof}


\begin{theorem} \label{main_generic_iso} Consider networked system (\ref{sub_dynamic}) with known $(A,H,C)$ and zero-nonzero patterns of weights $\{w_{ij}\}$.  A failure set ${\mathbb E}\!=\!\{{\cal E}_1,...,{\cal E}_r\}$ is generically isolable, if and only~if
\begin{equation}\label{graphcondition2} d_{\min}^{\mathbb E}\le r_{\max}-1,\end{equation}where the transfer index $r_{\max}$ is defined in Section \ref{section_detection}.
\end{theorem}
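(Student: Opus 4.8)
The plan is to reduce generic isolability of $\mathbb E$ to the generic pairwise distinguishability of the lumped matrices $\Phi_0,\Phi_1,\dots,\Phi_r$ and then to invoke Proposition \ref{generic_distinguish}. The structural fact on which everything rests is Proposition \ref{algebraic_isolation} (together with Theorem \ref{algebraic_theorme1}): for a fixed weight realization, $\mathbb E$ is isolable if and only if $\Phi_i$ and $\Phi_j$ are distinguishable for every pair $i\ne j$ in $\{0,\dots,r\}$ --- and there are only finitely many such pairs. I would state this reduction first and then treat the two directions separately.

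\emph{Sufficiency.} Suppose $d_{\min}^{\mathbb E}\le r_{\max}-1$. Since $d_{\min}^{\mathbb E}=\max_{0\le i<j\le r}d_{ij}$, this forces $d_{ij}\le r_{\max}-1$ for every pair, so by Proposition \ref{generic_distinguish} each pair $(\Phi_i,\Phi_j)$ is generically distinguishable: there is a proper algebraic variety ${\cal N}_{ij}\subseteq{\mathbb R}^{|{\cal E}|}$ of Lebesgue measure zero outside of which $\Phi_i$ and $\Phi_j$ are distinguishable. The finite union $\bigcup_{0\le i<j\le r}{\cal N}_{ij}$ is contained in a proper algebraic variety of ${\mathbb R}^{|{\cal E}|}$, hence still has measure zero, exactly the argument used in Propositions \ref{equal_isolation} and \ref{generic_isolation}. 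For any weight realization outside this union all pairs are simultaneously distinguishable, so $\mathbb E$ is isolable by Proposition \ref{algebraic_isolation}; hence $\mathbb E$ is generically isolable.

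\emph{Necessity.} Suppose $d_{\min}^{\mathbb E}> r_{\max}-1$. Then some pair $(i^*,j^*)$, $i^*\ne j^*$, satisfies $d_{i^*j^*}> r_{\max}-1$ (this includes the case $d_{i^*j^*}=\infty$, i.e.\ no directed path from $V_{R}({\cal E}_{i^*j^*})$ to $\cal S$ in ${\cal G}_{i^*}$). I would then run the necessity argument of Proposition \ref{generic_distinguish}, which mirrors that of Theorem \ref{main_generic_dect} with $W_{i^*}$ playing the role of $W$ and $\Delta W_{i^*j^*}$ that of $\Delta W$: by Lemma \ref{distance}, $S W_{i^*}^k\Delta W_{i^*j^*}=0$ for every $k<d_{i^*j^*}$ regardless of the values of $\{w_{ij}\}$, and since $r_{\max}-1<d_{i^*j^*}$ this holds for all $k\in\{0,\dots,r_{\max}-1\}$; consequently every term of the expansion (\ref{basic_exp2}) of $Q(\lambda I-\Phi_{i^*})^{-1}\Delta\Phi_{i^*j^*}$ vanishes (the same conclusion is immediate when $r_{\max}=0$, where the sum is empty, or when $d_{i^*j^*}=\infty$). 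Thus $Q(\lambda I-\Phi_{i^*})^{-1}\Delta\Phi_{i^*j^*}\equiv 0$ for \emph{every} weight realization, so $\Phi_{i^*}$ and $\Phi_{j^*}$ are indistinguishable for every realization and $\mathbb E$ is isolable for none; therefore $\mathbb E$ is not generically isolable.

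The only genuinely delicate point, and the one I would verify most carefully, is the parameter coupling: the adjacency matrices $W_0,\dots,W_r$ (and hence all differences $\Delta W_{ij}$) are obtained by zeroing entries of one common realization of $W$, so the free parameters in $W_{i^*}$ and in $\Delta W_{i^*j^*}$ are not independent. Proposition \ref{generic_distinguish} was, however, established precisely under this coupling (its proof explicitly accommodates entries of $\Delta W_{ij}$ absent from $W_i$), so no fresh estimate is needed; it only remains to check that passing from a single pair to all finitely many pairs preserves the measure-zero exceptional set --- the finite-union observation used in the sufficiency step --- after which the equivalence $d_{\min}^{\mathbb E}\le r_{\max}-1\iff$ ``$d_{ij}\le r_{\max}-1$ for all $i,j$'' closes the argument. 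Everything else is bookkeeping over the pair index.
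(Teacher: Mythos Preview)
Your proposal is correct and follows essentially the same route as the paper: reduce isolability to pairwise distinguishability via Proposition~\ref{algebraic_isolation}, invoke Proposition~\ref{generic_distinguish} for each pair, and use the finite-union-of-null-sets argument for sufficiency. The only cosmetic difference is that in the necessity direction you unpack the mechanics of Proposition~\ref{generic_distinguish} (the vanishing of $SW_{i^*}^k\Delta W_{i^*j^*}$ via Lemma~\ref{distance}) rather than citing it as a black box, and you add an explicit remark on the parameter coupling between $W_{i^*}$ and $\Delta W_{i^*j^*}$; both are welcome but not new ingredients.
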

\begin{proof} This theorem is based on Propositions \ref{algebraic_isolation}, \ref{generic_detect}, \ref{generic_distinguish} and Theorem \ref{main_generic_dect}. For necessity, if (\ref{graphcondition2}) is not true,  then there exist two integers $i,j\in\{0,...,r\}$ such that $d_{ij}>r_{\max}-1$. From Proposition \ref{generic_distinguish}, $\Phi_i$ and $\Phi_j$ are not generically distinguishable, which means that $\mathbb E$ is not generically isolable.

For sufficiency, let $Z=(z_1,...,z_{|\cal E|})$ be free parameters in $\{w_{ij}\}$ that can take values independently. For each pair $i,j\in\{0,...,r\}$, $i<j$,
following Proposition \ref{generic_distinguish}, a numerical realization for $Z$ exists so that $\Phi_i$ and $\Phi_j$ are distinguishable. From Proposition \ref{generic_detect}, the set of values for $Z$ making $\Phi_i$ and $\Phi_j$ not distinguishable, denoted by ${\cal P}_{ij}$, has Lebesgue measure zero  in ${\mathbb R}^{|\cal E|}$. {As $\bigcup\nolimits_{0\le i< j\le r}{\cal P}_{ij}$ still has zero Lebesgue measure in  ${\mathbb R}^{|\cal E|}$}, there always exists $Z$ in ${\mathbb R}^{|\cal E|}\backslash (\bigcup\nolimits_{0\le i< j\le r}{\cal P}_{ij})$  making $\Phi_i$ and $\Phi_j$ distinguishable, for each pair $(i,j)$ with $0\le i< j\le r$. With Proposition \ref{algebraic_isolation}, this proves the sufficiency.
\end{proof}

\begin{remark}In Theorem \ref{main_generic_iso}, determining $d_{\min}^{\mathbb E}$ requires computing $d_{ij}$ for ${\tiny{\left(\begin{array}{l} r+1\\ \ \ \ 2 \end{array}\right)}}$ times, which grows quadratically with $|\mathbb E|$. When $|\mathbb E|$ grows exponentially with $|\cal E|$, this is still a huge computation cost. It is excepted that, exploring the inherent structures of $\mathbb E$ may sometimes avoid computing all $d_{ij}$ (c.f., Proposition \ref{equivalent}).\end{remark}

Theorems \ref{main_generic_dect} and \ref{main_generic_iso} give some fundamental structural limitations for the networked system to support detectability and isolability of a failure (set). These conditions must be satisfied before whatever detection and isolation algorithms are valid.

{ Using Theorems \ref{main_generic_dect} and \ref{main_generic_iso}, an interesting finding is that, if every two elements of a failure set $\mathbb E$ do not intersect (in terms of edges) then the generic isolability of $\mathbb E$ is equivalent to the generic detectability of every element of $\mathbb E$; see the following proposition.

\begin{proposition}[A class of generically isolable failure sets] \label{equivalent}
In networked system (\ref{sub_dynamic}), given a failure set ${\mathbb E}=\{{\cal E}_1,...,{\cal E}_r\}$,  if ${\cal E}_i\cap {\cal E}_j=\emptyset$ $\forall i\ne j$, then $\mathbb E$ is generically isolable, if and only if each ${\cal E}_i$ is generically detectable for $i=1,...,r$.
\end{proposition}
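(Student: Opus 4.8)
The plan is to convert the statement entirely into the graph-theoretic criteria of Theorems~\ref{main_generic_dect} and~\ref{main_generic_iso}. Write $d_{\min}({\cal E}_k)\doteq\min_{v\in V_R({\cal E}_k),\,u\in{\cal S}}{\rm dist}(v,u,{\cal G})$ for the distance index of $\cal G$ attached to the single failure ${\cal E}_k$, so that ``${\cal E}_k$ generically detectable'' means $d_{\min}({\cal E}_k)\le r_{\max}-1$, while ``$\mathbb E$ generically isolable'' means $d^{\mathbb E}_{\min}=\max_{0\le i<j\le r}d_{ij}\le r_{\max}-1$. Hence, under the disjointness hypothesis, it is enough to prove the purely combinatorial equivalence
\[
\big(\forall\,1\le k\le r:\ d_{\min}({\cal E}_k)\le r_{\max}-1\big)\iff\big(\forall\,0\le i<j\le r:\ d_{ij}\le r_{\max}-1\big).
\]

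The first thing I would record is that $d_{0j}=d_{\min}({\cal E}_j)$: since ${\cal E}_0=\emptyset$ we have ${\cal E}_{0j}={\cal E}_j$ and ${\cal G}_0={\cal G}$, so $d_{0j}=\min_{v\in V_R({\cal E}_j),\,u\in{\cal S}}{\rm dist}(v,u,{\cal G})=d_{\min}({\cal E}_j)$. This already gives the ``$\Leftarrow$'' direction of the equivalence (the ``only if'' half of the proposition): if $d^{\mathbb E}_{\min}\le r_{\max}-1$ then $d_{\min}({\cal E}_k)=d_{0k}\le d^{\mathbb E}_{\min}\le r_{\max}-1$ for each $k$.

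For ``$\Rightarrow$'' I would fix a pair $1\le i<j\le r$ (the rows with $i=0$ being covered by the previous paragraph) and bound $d_{ij}$ by a rerouting argument. Disjointness gives ${\cal E}_{ij}={\cal E}_i\cup{\cal E}_j$, hence $V_R({\cal E}_{ij})=V_R({\cal E}_i)\cup V_R({\cal E}_j)$. Choose $v^*\in V_R({\cal E}_j)$ and a shortest path $P$ in $\cal G$ from $v^*$ to some $u^*\in{\cal S}$; because ${\cal E}_j$ is generically detectable, $P$ exists and has finite length $|P|=d_{\min}({\cal E}_j)\le r_{\max}-1$ (when $r_{\max}=\infty$ generic detectability still forces $d_{\min}({\cal E}_j)<\infty$, by the footnote of Theorem~\ref{main_generic_dect}). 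If $P$ contains no edge of ${\cal E}_i$, then $P$ is a path of ${\cal G}_i=({\cal V},{\cal E}\setminus{\cal E}_i)$ starting at $v^*\in V_R({\cal E}_{ij})$, so $d_{ij}\le|P|\le r_{\max}-1$. Otherwise let $(a',b')$ be the edge of $P$ lying in ${\cal E}_i$ that is closest to $u^*$; by disjointness $(a',b')\in{\cal E}_i\setminus{\cal E}_j\subseteq{\cal E}_{ij}$, so $b'\in V_R({\cal E}_{ij})$, while the suffix of $P$ from $b'$ to $u^*$ uses no edge of ${\cal E}_i$ and is therefore a path of ${\cal G}_i$ of length $<|P|\le r_{\max}-1$. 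In all cases $d_{ij}\le r_{\max}-1$, so $d^{\mathbb E}_{\min}\le r_{\max}-1$ and Theorem~\ref{main_generic_iso} yields generic isolability of $\mathbb E$.

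I expect the rerouting step to be the only delicate part: it hinges on the fact that ${\cal G}_i$ deletes \emph{only} ${\cal E}_i$, so the suffix of $P$ after $b'$ survives in ${\cal G}_i$ even if it traverses edges of ${\cal E}_j$, and on the fact that $b'$ really belongs to $V_R({\cal E}_{ij})$ and not merely to $V_R({\cal E}_i)$ — which is exactly where ${\cal E}_i\cap{\cal E}_j=\emptyset$ is indispensable, since an edge in ${\cal E}_i\cap{\cal E}_j$ would cancel in the difference digraph $\Delta W_{ij}$ and its head need not be an ending node of ${\cal E}_{ij}$. Everything else (the $i=0$ rows and the $r_{\max}=\infty$/infinite-distance bookkeeping) is routine. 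As a by-product one even gets $d_{ij}\le\min\{d_{\min}({\cal E}_i),d_{\min}({\cal E}_j)\}$ by running the same argument starting from a node of $V_R({\cal E}_i)$, hence $d^{\mathbb E}_{\min}=\max_{1\le k\le r}d_{\min}({\cal E}_k)$, but this sharpening is not needed for the proposition.
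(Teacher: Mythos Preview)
Your proof is correct and follows essentially the same rerouting argument as the paper: start from a detectability-witnessing path in $\cal G$, and if it uses edges of ${\cal E}_i$, pass to the suffix beginning at the head of the last such edge, which lies in $V_R({\cal E}_{ij})$ and survives in ${\cal G}_i$. The only cosmetic difference is that the paper launches the path from $V_R({\cal E}_i)$ whereas you launch it from $V_R({\cal E}_j)$, and you handle the $i=0$ rows explicitly via $d_{0j}=d_{\min}({\cal E}_j)$; your closing observation that $d^{\mathbb E}_{\min}=\max_k d_{\min}({\cal E}_k)$ is a nice sharpening the paper does not state.
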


\begin{proof} The generic detectability of every element of $\mathbb E$ is obviously necessary for generic isolability of $\mathbb E$ by Definition \ref{def1isolation}.
Now suppose that every element of $\mathbb E$ is generically detectable. Consider arbitrarily ${\cal E}_i$ and ${\cal E}_j$ with $i\ne j$.  Since ${\cal E}_i\cap {\cal E}_j=\emptyset$, we have ${\cal E}_{ij}={\cal E}_i\cup {\cal E}_j$. As ${\cal E}_i$ is generically detectable, there exists a path ${\cal P}_i\doteq \{e_1,...,e_{k}\}$ in $\cal G$ such that the starting node of link $e_1$ belongs to $V_R({\cal E}_i)$, the ending node of $e_k$ belongs to $\cal S$, and the length of ${\cal P}_i$ satisfies $k\le r_{\max}-1$. Consider the digraph ${\cal G}_i\doteq ({\cal V},{\cal E}\backslash {\cal E}_i)$. Let $E({\cal P}_i)$ denote the set of all links in ${\cal P}_i$.  If ${\cal E}_i\cap E({\cal P}_i)=\emptyset$, then the path ${\cal P}_i$ still exists in ${\cal G}_i$. In such case, ${\cal E}_i$ and ${\cal E}_j$ are generically distinguishable by Proposition \ref{generic_distinguish}. Otherwise, if ${\cal E}_i\cap E({\cal P}_i)\neq \emptyset$, suppose that ${\cal E}_i\cap E({\cal P}_i)=\{e_{w(1)},...,e_{w(l)}\}$ where $l=|{\cal E}_i\cap E({\cal P}_i)|\le k$, $w(1),...,w(l)\in \{1,...,k\}$, and $w(1)<w(2)<\cdots<w(l)$. Then,  there exists a path $P_s\doteq \{e_{w(l)+1},e_{w(l)+2},...,e_k\}$ in ${\cal G}_i$ with length no more than $k$. Such path starts from $V_R(e_{w(l)})\in V_R({\cal E}_{ij})$ and ends at $\cal S$. Consequently, ${\cal E}_i$ and ${\cal E}_j$ are generically distinguishable by Proposition \ref{generic_distinguish}. Since $i,j$ can be arbitrary, this proves the generic isolability of $\mathbb E$ by Theorem \ref{main_generic_iso}.
\end{proof}}

A commonly discussed failure set is ${\mathbb E}={\cal E}$, i.e, the set of all single-link failures \cite{Rahimian2015DetectionAI}.
The following corollary, immediate from Proposition \ref{equivalent}, points out that sensor placement for generic detectability of every single-link failure is equivalent to that for generic isolability of the set of all single-link failures for the networked system, which is a little surprising.


\begin{corollary} \label{equivalent_singlelink}
In networked system (\ref{sub_dynamic}), if every single-link failure of $\cal G$ is generically detectable for a sensor placement $\cal S$, then the set of all single-link failures (i.e., ${\mathbb E}={\cal E}$) is generically isolable.
\end{corollary}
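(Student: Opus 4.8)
The plan is to read the statement off Proposition~\ref{equivalent} by noting that the family of all single-link failures consists of pairwise disjoint edge sets. When ${\mathbb E}={\cal E}$, each candidate failure is a singleton $\{e\}$ with $e\in{\cal E}$ (the no-failure case being ${\cal E}_0=\emptyset$). For any two distinct links $e_i\neq e_j$ we have $\{e_i\}\cap\{e_j\}=\emptyset$, and $\emptyset$ is trivially disjoint from every singleton, so the hypothesis ``${\cal E}_i\cap{\cal E}_j=\emptyset$ for all $i\neq j$'' required by Proposition~\ref{equivalent} holds automatically. Proposition~\ref{equivalent} then gives that ${\mathbb E}={\cal E}$ is generically isolable if and only if every single-link failure $\{e\}$, $e\in{\cal E}$, is generically detectable under ${\cal S}$. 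Since the hypothesis of the corollary is exactly that every single-link failure is generically detectable, the conclusion follows.

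For a self-contained check that does not invoke Proposition~\ref{equivalent}, I would argue through Theorem~\ref{main_generic_iso} and Proposition~\ref{generic_distinguish}. By Theorem~\ref{main_generic_iso} it suffices to verify $d_{ij}\le r_{\max}-1$ for every pair $i,j\in\{0,\dots,r\}$ with $i\neq j$; fix such a pair and assume without loss of generality that ${\cal E}_i=\{e_i\}$ is a genuine single-link failure. Generic detectability of $\{e_i\}$, via Theorem~\ref{main_generic_dect}, supplies a path ${\cal P}$ in ${\cal G}$ from the ending node of $e_i$ to a sensor node in ${\cal S}$ of length at most $r_{\max}-1$. If ${\cal P}$ avoids $e_i$, it survives in ${\cal G}_i=({\cal V},{\cal E}\setminus\{e_i\})$; otherwise, truncating ${\cal P}$ after its last use of $e_i$ yields a sub-path that lies entirely in ${\cal G}_i$, still starts at the ending node of $e_i$, still ends in ${\cal S}$, and has length no greater than that of ${\cal P}$. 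Since the ending node of $e_i$ belongs to $V_R({\cal E}_{ij})$ (because ${\cal E}_{ij}$ equals $\{e_i,e_j\}$ or $\{e_i\}$), this path certifies $d_{ij}\le r_{\max}-1$, so $\Phi_i$ and $\Phi_j$ are generically distinguishable by Proposition~\ref{generic_distinguish}. As the pair was arbitrary, Theorem~\ref{main_generic_iso} yields generic isolability of ${\cal E}$.

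There is essentially no obstacle here: the corollary is just Proposition~\ref{equivalent} specialized to singleton failure sets. The only point deserving a moment's attention is the path-truncation step — verifying that deleting a single link cannot destroy a short ``detecting'' path beyond the link's last appearance on it — and this is immediate, as indicated above.
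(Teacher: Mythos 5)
Your first paragraph is exactly the paper's argument: the corollary is stated as immediate from Proposition~\ref{equivalent}, since distinct single-link failures are pairwise disjoint edge sets, and your deduction is correct. The additional self-contained check in your second paragraph simply re-runs the path-truncation argument from the proof of Proposition~\ref{equivalent} in the singleton case, so it is also sound but adds nothing beyond the paper's route.
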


{
On the other hand, the following corollary validates the intuition that, a failure set not being isolable may arise if it contains two elements, of which one is contained in the other and their difference is not generically detectable in the faultless system.

\begin{corollary}[A class of generically not isolable failure sets] \label{notisolable}
Consider a failure set ${\mathbb E}$ of the networked system (\ref{sub_dynamic}). If ${\mathbb E}$ contains two elements ${\cal E}_i$ and ${\cal E}_j$ such that ${\cal E}_j\subseteq {\cal E}_i$, and ${\cal E}_i\backslash {\cal E}_j$ is generically undetectable, then ${\mathbb E}$ is generically not isolable.
\end{corollary}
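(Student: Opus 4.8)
The plan is to reduce the claim, via Definition~\ref{def1isolation} and Proposition~\ref{algebraic_isolation}, to showing that the single pair $\Phi_i,\Phi_j$ fails to be generically distinguishable, and then to obtain this from Proposition~\ref{generic_distinguish} by observing that deleting edges can only lengthen shortest paths. First I would unwind the notation. Since ${\cal E}_j\subseteq{\cal E}_i$, we have ${\cal E}_i\cup{\cal E}_j={\cal E}_i$ and ${\cal E}_i\cap{\cal E}_j={\cal E}_j$, hence the symmetric-difference link set is ${\cal E}_{ij}={\cal E}_i\backslash{\cal E}_j$ and $\Delta W_{ij}=W_i-W_j$ is precisely the adjacency matrix of ${\cal E}_i\backslash{\cal E}_j$. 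Because $\mathbb E$ is generically isolable only if $\Phi_i$ and $\Phi_j$ are generically distinguishable (Definition~\ref{def1isolation} together with Proposition~\ref{algebraic_isolation}), it suffices to refute the latter.

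Next I would apply Proposition~\ref{generic_distinguish}: $\Phi_i$ and $\Phi_j$ are generically distinguishable if and only if $d_{ij}\le r_{\max}-1$, where $d_{ij}=\min_{v\in V_R({\cal E}_i\backslash{\cal E}_j),\,u\in{\cal S}}{\rm dist}(v,u,{\cal G}_i)$ and ${\cal G}_i=({\cal V},{\cal E}\backslash{\cal E}_i)$. The decisive monotonicity step is that ${\cal G}_i$ is obtained from $\cal G$ by deleting edges, so ${\rm dist}(v,u,{\cal G}_i)\ge{\rm dist}(v,u,{\cal G})$ for every ordered pair $(v,u)$; minimizing both sides over the same set $v\in V_R({\cal E}_i\backslash{\cal E}_j)$, $u\in{\cal S}$ yields $d_{ij}\ge d_{\min}$, where $d_{\min}$ is the distance index from Theorem~\ref{main_generic_dect} attached to the failure ${\cal E}_i\backslash{\cal E}_j$ in the faultless topology $\cal G$.

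Now I would invoke the hypothesis. Since detectability is a generic property (Proposition~\ref{generic_detect}) characterized by Theorem~\ref{main_generic_dect}, generic undetectability of ${\cal E}_i\backslash{\cal E}_j$ is equivalent to $d_{\min}>r_{\max}-1$ (the footnote of Theorem~\ref{main_generic_dect} handling the $d_{\min}=r_{\max}=\infty$ case). Chaining this with $d_{ij}\ge d_{\min}$ gives $d_{ij}>r_{\max}-1$, so $\Phi_i$ and $\Phi_j$ are not generically distinguishable by Proposition~\ref{generic_distinguish}; equivalently $d_{\min}^{\mathbb E}\ge d_{ij}>r_{\max}-1$, and Theorem~\ref{main_generic_iso} then gives that $\mathbb E$ is not generically isolable.

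I do not expect a genuine obstacle: the argument is a short chain of already-proved results. The only points requiring care are (i) identifying ${\cal E}_{ij}$ with ${\cal E}_i\backslash{\cal E}_j$ under the containment hypothesis, and (ii) getting the distance inequality in the right direction --- edge deletion can only increase shortest-path length, which is exactly what converts generic undetectability of ${\cal E}_i\backslash{\cal E}_j$ in $\cal G$ into generic indistinguishability of $\Phi_i,\Phi_j$ in the subgraph ${\cal G}_i$. The remaining bookkeeping, namely keeping the $\infty$ conventions consistent between Theorems~\ref{main_generic_dect} and~\ref{main_generic_iso}, is already settled by the paper's footnotes.
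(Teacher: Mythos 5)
Your argument is correct and is exactly the expansion of the paper's one-line proof (``straightforward from Theorem~\ref{main_generic_iso}''): under ${\cal E}_j\subseteq{\cal E}_i$ one has ${\cal E}_{ij}={\cal E}_i\backslash{\cal E}_j$, edge deletion gives $d_{ij}\ge d_{\min}>r_{\max}-1$ from the generic undetectability hypothesis, and hence $d_{\min}^{\mathbb E}>r_{\max}-1$ violates condition~(\ref{graphcondition2}). No gaps; the monotonicity direction and the identification of the symmetric difference are both handled correctly.
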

\begin{proof} The proof is straightforward from Theorem \ref{main_generic_iso}.
\end{proof}}


\section{Sensor Placement for Generic Detectability and Isolability}\label{section_sensor}
In this section, on the basis of results in Section \ref{fengwei}, we explore the problems of determining the minimum number of sensors to ensure generic detectability and isolability. We will reduce these problems to the hitting set problems and use greedy algorithms to approximate them with guaranteed performances. { It is remarkable that linking the detection of outbreaks (spreading of information) or link failures over networks to the coverage (or connectivity) from a subset of nodes can also be found in the computer community (c.f. \cite{leskovec2007cost-effective,kleinberg2008network}), where the networks are static without nodal dynamics. }
\subsection{Sensor Placement Problems}
We consider two sensor placement problems.
\begin{problem}[sensor placement for detectability of every single-link failure] \label{prob1} For networked system (\ref{sub_dynamic}), determine the minimum number of sensors such that the failure of every single-link of $\cal E$ is generically detectable.
\end{problem}

\begin{problem}[sensor placement for failure isolability]\label{prob2} For networked system (\ref{sub_dynamic}), determine the minimum number of sensors such that a given failure set ${\mathbb E}=\{{\cal E}_1,...,{\cal E}_r\}$ is generically isolable.
\end{problem}

By Corollary \ref{equivalent_singlelink}, if ${\mathbb E}={\cal E}$, Problem \ref{prob2} is exactly equivalent to Problem \ref{prob1}.
\subsection{Hitting Set Problem} Both Problems \ref{prob1} and \ref{prob2} are combinatorial problems.  To further solve them, we introduce the {hitting set} problem.
\begin{definition}[Hitting set problem] Let ${\Sigma}=\{S_1,...,S_q\}$ be a collection of subset of $V$, i.e., $S_i\subseteq V$, $\forall i$.  The hitting set problem is to find the smallest subset $\bar S\subseteq V$ that intersects (hits) every set in $\Sigma$, i.e., $S_i\cap \bar S\ne \emptyset$, $\forall i$.
\end{definition}

Hitting set problem is known to be NP-hard. The greedy algorithm can return a solution with a multiplicative factor $O({\rm ln}\,q)$, more precisely, $1+{\rm ln}\,q$, of the optimal solution, which is the best approximation performance that could be achieved in polynomial time \cite{Submodular}. The greedy algorithm for solving a hitting set problem is given as Algorithm \ref{alg1}, in which the function $f(\bar S)$ is defined as $f(\bar S)=\sum \nolimits_{i=1}^q {\mathbb I}(S_i\cap \bar S)$ for $\bar S\subseteq V$, where function ${\mathbb I}(x)=1$ if $x\ne \emptyset$, otherwise ${\mathbb I}(x)=0$. The basic idea is to find the element from $V\backslash \bar S$ that returns the maximum increase in the number of intersected elements between $\bar S$ and $S_i|_{i=1}^q$ in each iteration.

\begin{algorithm} 
  {{{{
\caption{:Greedy Algorithm for Hitting Set Problem} 
\label{alg1} 
\begin{algorithmic}[1] 
\REQUIRE ($\Sigma$, $V$) 
\STATE Initialize $\bar S=\emptyset$.
\WHILE{$f(\bar S)  < q$}
\STATE $\bar s \leftarrow \arg \mathop {\max }\nolimits_{s \in { V}\setminus \bar S} f(\bar S\cup\{s\})-f(\bar S)$
 \STATE $\bar S \leftarrow {\bar S} \cup \{\bar s\}$
 \ENDWHILE
 \ENSURE $\bar S$
\end{algorithmic}}}
}}
\end{algorithm}

\subsection{Analysis and Algorithms}
An analytical result is first given as follows, which, immediate from Theorem \ref{main_generic_dect}, is the basis of the subsequent derivations.
\begin{proposition} For networked system (\ref{sub_dynamic}), the minimum number of sensors for generic detectability of arbitrary given failure ${\cal E}_f$ is $1$. Moreover, any node in $\bigcup \nolimits_{i\in V_R({\cal E}_f)}\{j\in {\cal V}: {\rm dist}(i,j,{\cal G})\le r_{\max}-1\}$ can be the sensor node.
\end{proposition}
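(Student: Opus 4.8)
The plan is to deduce the statement directly from the graph-theoretic criterion of Theorem \ref{main_generic_dect}: a failure ${\cal E}_f$ is generically detectable if and only if $d_{\min}\le r_{\max}-1$, where $d_{\min}=\min_{v\in V_R({\cal E}_f),\,u\in{\cal S}}{\rm dist}(v,u,{\cal G})$ is the only way in which the sensor set $\cal S$ enters the condition. Thus the whole proposition is about how small one can make $|{\cal S}|$ while keeping this minimum distance at most $r_{\max}-1$.

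First I would establish the lower bound. If ${\cal S}=\emptyset$, the minimum defining $d_{\min}$ is taken over the empty set, so $d_{\min}=\infty$; by the convention in the footnote of Theorem \ref{main_generic_dect}, the inequality $d_{\min}\le r_{\max}-1$ then fails for every value of $r_{\max}$ (equivalently, with no sensor there is no output at all and ${\cal E}_f$ is trivially undetectable). Hence at least one sensor is necessary, so the minimum number is at least $1$.

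Next, for the upper bound together with the ``moreover'' claim, I would pick any node $j$ in $\bigcup_{i\in V_R({\cal E}_f)}\{j'\in{\cal V}:{\rm dist}(i,j',{\cal G})\le r_{\max}-1\}$ and take ${\cal S}=\{j\}$. This union is nonempty whenever $r_{\max}\ge 1$ (i.e. $CH_s(\lambda)\not\equiv 0$), since each ending node $i\in V_R({\cal E}_f)$ itself satisfies ${\rm dist}(i,i,{\cal G})=0\le r_{\max}-1$; in the degenerate case $r_{\max}=0$ no failure is ever detectable and the statement is vacuous. By the choice of $j$ there is some $i\in V_R({\cal E}_f)$ with ${\rm dist}(i,j,{\cal G})\le r_{\max}-1$, hence $d_{\min}=\min_{v\in V_R({\cal E}_f)}{\rm dist}(v,j,{\cal G})\le{\rm dist}(i,j,{\cal G})\le r_{\max}-1$. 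Theorem \ref{main_generic_dect} now yields that ${\cal E}_f$ is generically detectable with this single sensor, which simultaneously shows that one sensor suffices and that every node in the displayed union is an admissible location. (When $r_{\max}=\infty$ one reads the set as $\{j':{\rm dist}(i,j',{\cal G})<\infty\}$, so that $d_{\min}$ stays finite and the footnote caveat does not apply; the converse — that these are exactly the admissible single-node placements — likewise follows from Theorem \ref{main_generic_dect} by the same distance estimate, but is not needed here.)

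Since everything is a substitution of $|{\cal S}|=1$ into an already-proven criterion, there is no real obstacle. The only points deserving a line of care are the boundary cases ${\cal S}=\emptyset$ and $r_{\max}=0$, and verifying that the union appearing in the statement is nonempty so that the claimed one-sensor construction actually exists.
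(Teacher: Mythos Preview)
Your argument is correct and matches the paper's approach: the proposition is stated there as ``immediate from Theorem \ref{main_generic_dect}'' with no further proof, and you have simply spelled out that immediacy by plugging $|{\cal S}|=1$ into the distance criterion. Your extra care with the edge cases ${\cal S}=\emptyset$ and $r_{\max}=0$ is sound and goes slightly beyond what the paper writes explicitly.
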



%

Consider Problem \ref{prob1}. Denote the set of nodes which has at least one ingoing link (including self-loop) from other nodes in $\cal G$ by ${\cal V}_s$, i.e., ${\cal V}_s=\{i\in {\cal V}: (j,i)\in {\cal E}, j\in {\cal V}\}$.  For each $i\in {\cal V}_s$, denote the set of nodes whose distance from $i$ is not greater than $r_{\max}-1$ by $S_i$, i.e., $S_i=\{j\in{\cal V}: {\rm dist}(i,j,{\cal G})\le r_{\max}-1\}$. From Theorem \ref{main_generic_dect}, a sensor location ${\cal S}\subseteq {\cal V}$ makes every single-link failure generically detectable, if and only if ${\cal S}$ intersects every $S_i$, i.e., $S_i\cap {\cal S}\ne \emptyset$, $\forall i\in {\cal V}_s$. Let $$\Sigma=\{S_i:i\in {\cal V}_s\}.$$Then, finding the smallest $\cal S$ is equivalent to solving the hitting set problem on $(\Sigma,{\cal V})$. Hence, the greedy algorithm (Algorithm \ref{alg1}) could be adopted to approximate Problem \ref{prob1}.

Consider Problem \ref{prob2}. For each pair $i,j\in\{0,1,...,r\}$ with $i<j$, define a set $S_{ij}\subseteq {\cal V}$ as the set of sensor nodes associated with which $\Phi_i$ and $\Phi_j$ are generically distinguishable. From Corollary \ref{single_failure},
$$S_{ij}=\bigcup\nolimits_{k\in V_R({\cal E}_{ij})}\{l\in {\cal V}: {\rm dist}(k,l,{{\cal G}_i})\le r_{\max}-1\},$$
i.e., $S_{ij}$ is the set of nodes whose distance from one node of $V_R({\cal E}_{ij})$ is no more than $r_{\max}-1$ in ${\cal G}_i$. Afterwards, define a collection $\bar \Sigma$ as
$$\bar \Sigma=\{S_{01},S_{02},\cdots,S_{0r},S_{12},\cdots,S_{1r},\cdots,S_{r-1,r}\}.$$
From Theorem \ref{main_generic_iso}, a sensor location $\bar {\cal S}\subseteq {\cal V}$ makes ${\mathbb E}$ generically isolable, if and only if $\bar {\cal S}$ intersects every set in $\bar \Sigma$. Hence, finding the smallest $\bar {\cal S}$ is equivalent to solving the hitting set problem on $(\bar \Sigma, {\cal V})$, which could also be approximated via the greedy algorithm.

We summarize the above analysis as follows, along with some guaranteed performances of the associated algorithms.

\begin{proposition} \label{prosensor} Problem \ref{prob1} is equivalent to the hitting set problem on $(\Sigma,\cal V)$. The greedy algorithm (Algorithm \ref{alg1}) can return an $O({\rm ln}\,|{\cal V}_s|)$ approximation of the optimal solution.
\end{proposition}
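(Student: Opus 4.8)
The plan is to establish the two assertions of Proposition~\ref{prosensor} in sequence: first the equivalence of Problem~\ref{prob1} with a hitting set instance on $(\Sigma, {\cal V})$, and then the approximation guarantee of the greedy algorithm. For the equivalence, I would invoke Theorem~\ref{main_generic_dect} to translate generic detectability of a single-link failure $\{(j,i)\}$ into the graph condition $d_{\min}\le r_{\max}-1$. Since the ending node of the link $(j,i)$ is precisely $i$, the condition reads: there is some sensor node $u\in{\cal S}$ with ${\rm dist}(i,u,{\cal G})\le r_{\max}-1$, equivalently $S_i\cap{\cal S}\neq\emptyset$ where $S_i=\{l\in{\cal V}:{\rm dist}(i,l,{\cal G})\le r_{\max}-1\}$. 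Iterating over all links of $\cal E$, and observing that a link $(j,i)$ exists only when $i\in{\cal V}_s$, every single-link failure is generically detectable if and only if ${\cal S}$ meets every member of $\Sigma=\{S_i:i\in{\cal V}_s\}$. Hence a minimum-cardinality such ${\cal S}$ is exactly an optimal hitting set for $(\Sigma,{\cal V})$, and conversely; this gives the first claim.

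For the approximation bound, I would appeal to the standard analysis of the greedy algorithm for the hitting set problem (equivalently, set cover by duality), citing \cite{Submodular}. The function $f(\bar S)=\sum_{i}{\mathbb I}(S_i\cap\bar S)$ is monotone and submodular, and the greedy rule in Algorithm~\ref{alg1} selects at each step the element maximizing the marginal gain; by the classical $(1+\ln q)$ bound for greedy set cover, with $q=|\Sigma|$, the returned $\bar S$ has size at most $(1+\ln q)$ times the optimum. Here $q=|{\cal V}_s|\le|{\cal V}|$, so the factor is $O(\ln|{\cal V}_s|)$, which is the stated guarantee. I would also note, for completeness, that this matches the known inapproximability threshold for set cover, so no polynomial-time algorithm does essentially better unless P$=$NP.

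The main obstacle, such as it is, is not a deep one: it is making the reduction airtight, specifically checking that the ``if and only if'' in Theorem~\ref{main_generic_dect} is applied with the right identification of $V_R({\cal E}_f)$ for a single-link failure (a singleton, the link's head node) and that ${\cal V}_s$ correctly captures exactly those head nodes that can arise — including the subtle point that self-loops $(i,i)\in{\cal E}$ contribute $i$ to ${\cal V}_s$ and the corresponding $S_i$ always contains $i$ at distance $0$ when $r_{\max}\ge 1$. One should also record the degenerate case $r_{\max}=0$, in which no single-link failure is generically detectable and $\Sigma$ consists of empty sets, so the hitting set instance is infeasible — consistent with the impossibility of the sensor placement. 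Once these bookkeeping points are settled, both halves follow immediately from Theorem~\ref{main_generic_dect} and the cited greedy approximation result, so the proof is short.
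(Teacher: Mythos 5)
Your proof is correct and follows essentially the same route as the paper: Theorem~\ref{main_generic_dect} identifies generic detectability of a single-link failure with ending node $i$ with the condition ${\cal S}\cap S_i\neq\emptyset$, ranging over $i\in{\cal V}_s$ yields exactly the hitting set instance $(\Sigma,{\cal V})$, and the $1+\ln q$ greedy bound with $q=|\Sigma|\le|{\cal V}_s|$ gives the stated factor. Your extra bookkeeping on self-loops and the degenerate case $r_{\max}=0$ is consistent with the paper's definitions and does not change the argument.
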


\begin{proposition} \label{prosensor2} Problem \ref{prob2} is equivalent to the hitting set problem on $(\bar \Sigma, \cal V)$. Algorithm \ref{alg1} can return an $O({\rm ln}\, \frac{1}{2}(r+1)r)$ approximation of the optimal solution.
\end{proposition}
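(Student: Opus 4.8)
The plan is to combine the graph-theoretic characterization of generic isolability in Theorem \ref{main_generic_iso} with the per-pair distinguishability condition of Proposition \ref{generic_distinguish}, so that Problem \ref{prob2} becomes a covering requirement over the $\binom{r+1}{2}$ candidate failure pairs, and then to invoke the standard approximation guarantee for the greedy algorithm on hitting set instances (Algorithm \ref{alg1}).

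First I would establish the equivalence. By Theorem \ref{main_generic_iso}, a sensor set $\bar{\cal S}$ renders $\mathbb E$ generically isolable if and only if $d_{ij}\le r_{\max}-1$ for every pair $i,j\in\{0,\dots,r\}$ with $i<j$. By Proposition \ref{generic_distinguish}, the inequality $d_{ij}\le r_{\max}-1$ holds precisely when there exist $u\in\bar{\cal S}$ and $v\in V_R({\cal E}_{ij})$ with ${\rm dist}(v,u,{\cal G}_i)\le r_{\max}-1$, i.e., when $\bar{\cal S}\cap S_{ij}\ne\emptyset$, with $S_{ij}=\bigcup_{k\in V_R({\cal E}_{ij})}\{l\in{\cal V}:{\rm dist}(k,l,{\cal G}_i)\le r_{\max}-1\}$ exactly the set defined before the statement. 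Hence $\bar{\cal S}$ makes $\mathbb E$ generically isolable if and only if $\bar{\cal S}$ hits every member of $\bar\Sigma=\{S_{ij}:0\le i<j\le r\}$, so a minimum-cardinality such $\bar{\cal S}$ is precisely a minimum hitting set for $(\bar\Sigma,{\cal V})$. (If some $S_{ij}=\emptyset$, both sides are infeasible, meaning $\mathbb E$ is generically not isolable for any placement.)

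Second I would count the instances: $\bar\Sigma$ contains one set $S_{ij}$ per unordered pair $\{i,j\}\subseteq\{0,\dots,r\}$ --- the symmetry $d_{ij}=d_{ji}$ makes the ordering irrelevant --- so $|\bar\Sigma|=\binom{r+1}{2}=\frac{1}{2}(r+1)r$. Applying the greedy analysis for hitting set (equivalently, set cover), Algorithm \ref{alg1} returns a solution within a multiplicative factor $1+\ln|\bar\Sigma|=O(\ln\frac{1}{2}(r+1)r)$ of the optimum, which is the claimed bound.

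The only non-routine point, and thus the main obstacle, is ensuring the bridge from Proposition \ref{generic_distinguish} to the explicit description of $S_{ij}$ is tight: one must verify that ``$d_{ij}\le r_{\max}-1$'' corresponds to $\bar{\cal S}$ containing \emph{some} node within distance $r_{\max}-1$ in ${\cal G}_i$ of \emph{some} ending node of ${\cal E}_{ij}$, rather than a stronger all-nodes or ${\cal G}_j$-based requirement; the minimization in the definition of $d_{ij}$ together with $d_{ij}=d_{ji}$ settles this. Everything else is a direct invocation of the already-proven results and the textbook greedy bound for hitting set.
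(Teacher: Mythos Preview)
Your proposal is correct and follows essentially the same route as the paper: the text preceding Proposition~\ref{prosensor2} already derives the equivalence by defining $S_{ij}$ as the set of sensor locations rendering $\Phi_i$ and $\Phi_j$ generically distinguishable, expressing it via the distance condition in ${\cal G}_i$, and then invoking Theorem~\ref{main_generic_iso} to reduce isolability to hitting every $S_{ij}$; the $O(\ln\frac{1}{2}(r+1)r)$ bound is the standard greedy guarantee on the $\binom{r+1}{2}$ sets. Your handling of the ``non-routine point'' (the min in $d_{ij}$ and the symmetry $d_{ij}=d_{ji}$) is a welcome clarification, and your citation of Proposition~\ref{generic_distinguish} is in fact the correct reference for the per-pair step (the paper's pointer to Corollary~\ref{single_failure} at that spot appears to be a slip).
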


 {\begin{remark}
 The approaches in this section provide solutions for the associated sensor placement problems based only on the network topologies with generic weights. In practical scenarios, some real factors may need further considering. One is the sensor resolution or the presence of noise. If the sensor resolution is too low or the signal-to-noise ratio is too small, then the sensors might not be able to distinguish the difference between the noisy nominal output and the faulty one (see \cite{Battistelli2015DetectingTV}). Additionally, if the subsystems have very slow response, or the faulty links have very small weights, then a relatively long observation time may need to distinguish the corresponding outputs.  All these factors will affect the choice of sensor locations in turn (see an example in Section \ref{section_example}). In these scenarios, some quantitive metrics may need to be developed to measure the detection/isolation performances/difficulties (such as the sensitivity of sensors to the effects of faulty links versus noise \cite{Chen1999RobustMF}, the distance between
the faulty trajectory and the nearest nominal one \cite{Baglietto2014DistinguishabilityOD}), just like the controllability metrics in \cite{T2016On}. The corresponding sensor placements  may then be cast as optimization problems to optimize these metrics.\end{remark} 
}
\section{Simulations and Examples}\label{section_example}
We present some simulations and examples to illustrate the main results of this paper.
\subsection{The Five-Node Networked System in Example \ref{example2}}
Consider the five-node networked system in Example \ref{example2}. Let all links shown in Fig. \ref{example_5nodes} have weight $1$. First, in line with Example \ref{example2}, to show the detectability of each single-link failure with sensor node ${\cal S}=\{1\}$, we collect the output responses of the corresponding systems after each single-link failure with a common random initial state $x_0\in {\mathbb R}^{15}$ in Fig. \ref{simulation_exp1}. From this figure, the output response after the failure of link $(1,2),(2,3)$, or $(3,4)$ is the same as that of the original system, whiles the output response after the failure of link $(2,5)$, $(4,5)$, or $(5,1)$ is different from that of the original system, which means each failure of the former three links is undetectable, and the contrary for the latter three links. This is consistent with the claim made in Example \ref{example2} based on Theorem \ref{main_generic_dect}.
Moreover, suppose our goal is to make every single-link failure detectable using as less sensors as possible. According to Proposition \ref{prosensor}, we can construct an equivalent hitting set problem as follows $$\Sigma=\big\{\{1\},\{2\},\{3,5\},\{4\},\{4\},\{5\}\big\}, {\cal V}=\{1,...,5\}.$$
Using Algorithm \ref{alg1} returns ${\cal S}=\{1,2,4,5\}$, which is the optimal solution. 

Now, consider the failure set ${\mathbb E}=\big\{\{(4,5),(3,4)\},\{(4,5)\}\big\}$. The output responses of the resulting systems after each failure with a common random initial state $x_0\in {\mathbb R}^{15}$ are shown in Fig. \ref{simulation_exp2}. From this figure, we know that both $\{(4,5),(3,4)\}$ and $\{(4,5)\}$ are detectable. However, $\mathbb E$ is not isolable because its two elements always generate the same outputs. This is consistent with Corollary \ref{notisolable}.

Finally, suppose our goal is to deployment the smallest sensors so that $\mathbb E$ is isolable. According to Proposition \ref{prosensor2}, this problem is equivalent to the hitting set problem defined as follows $$\bar \Sigma=\{\{1\},\{1,5\},\{4\}\}, {\cal V}=\{1,...,5\}.$$
The greedy algorithm returns $\bar {\cal S}=\{1,4\}$. Through exhaustive search, this solution is optimal. The isolability of $\mathbb E$ is validated by the output responses of the corresponding systems after failures; see Fig. \ref{simulation_exp2_sensor}.

\begin{figure}
  \centering
  \includegraphics[width=2.9in]{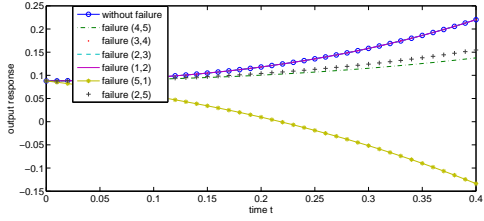}\\
  \caption{Output responses of the networked system in Example \ref{example2} after every single-link failure with ${\cal S}=\{1\}$. }\label{simulation_exp1}
\end{figure}

\begin{figure}
  \centering
  \includegraphics[width=2.9in]{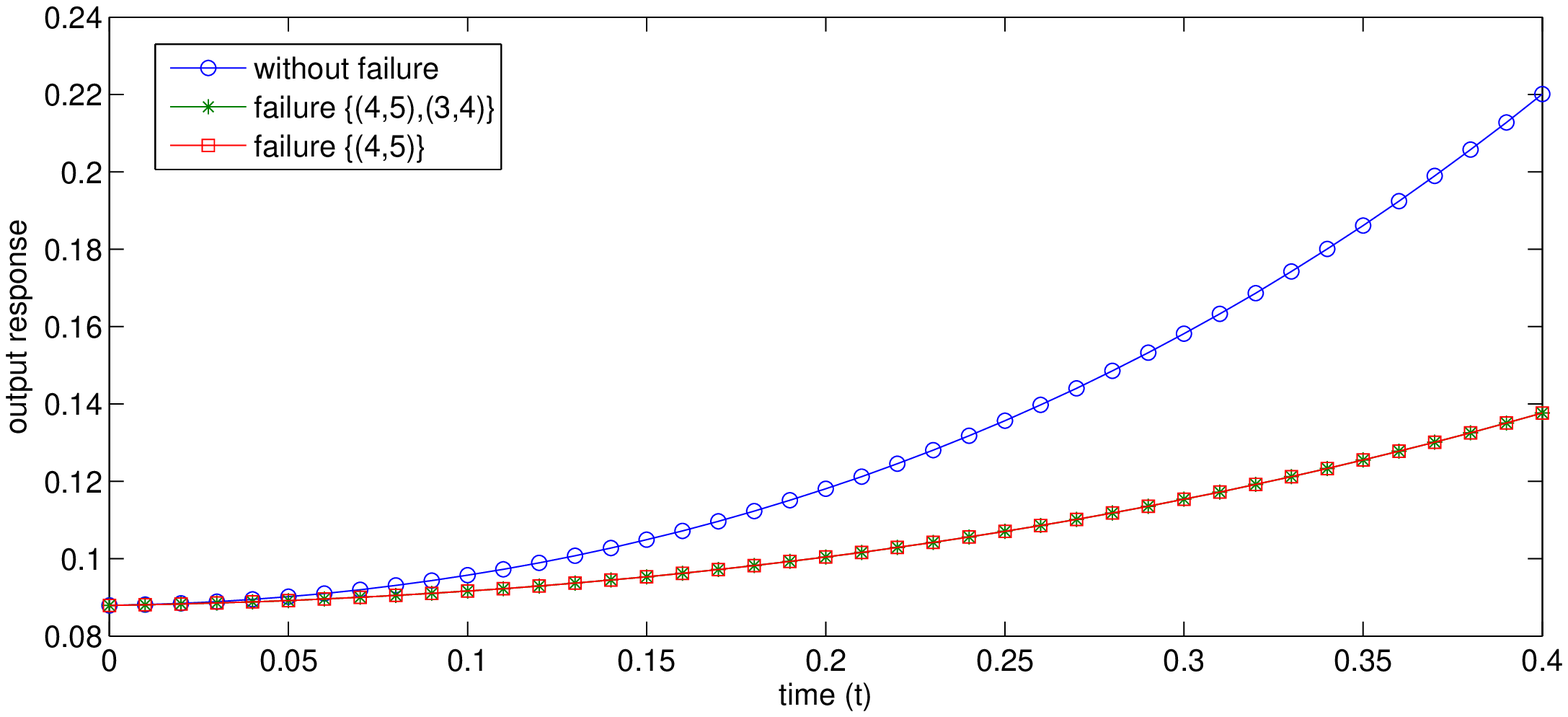}\\
  \caption{Output responses of the networked system in Example \ref{example2} after the failure set ${\mathbb E}=\{\{(4,5),(3,4)\},\{(4,5)\}\}$ with ${\cal S}=\{1\}$. }\label{simulation_exp2}
\end{figure}

\begin{figure}
  \centering
  \includegraphics[width=2.9in]{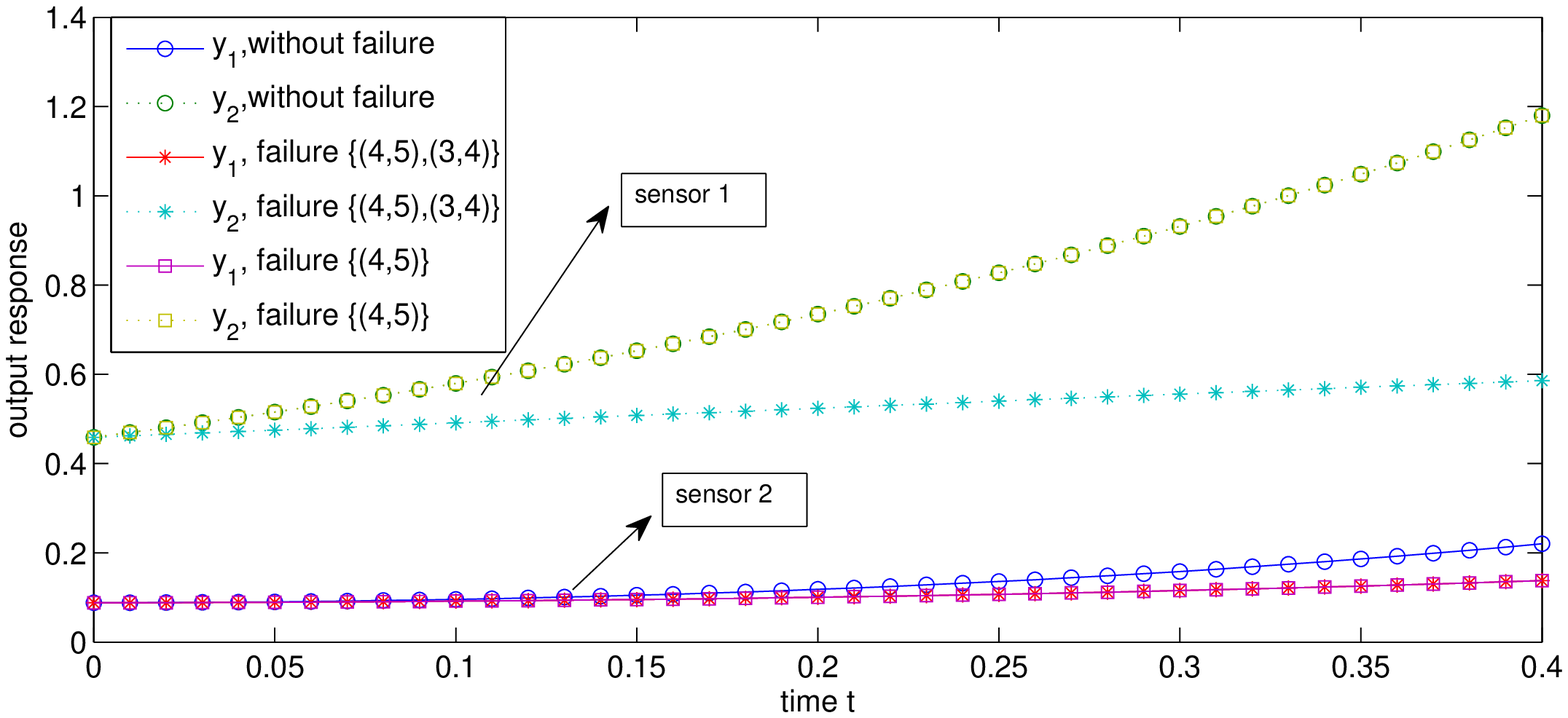}\\
  \caption{Output responses of the networked system in Example \ref{example2} after the failure set ${\mathbb E}=\big\{\{(4,5),(3,4)\},\{(4,5)\}\big\}$ with ${\cal S}=\{1,4\}$. }\label{simulation_exp2_sensor}
\end{figure}


\subsection{Power Network}
Consider a power network consisting of $N$ generators. The dynamics of each generator around its equilibrium state could be described by the following linearized Swing equation \cite{kundur1994power}:
\begin{equation}\label{sub_power} {m_i}{{\ddot \theta }_i} + {d_i}{{\dot \theta }_i} =  - \sum\limits_{j=1}^N {{k_{ij}}{\rm{(}}{\theta _i}{\rm{ - }}{\theta _j}{\rm{)}}}, \ y_i=\theta_i,\end{equation}
$i\in\{1,...,N\}$, where $\theta_i$ is the phrase angle, $m_i$ and $d_i$ are respectively the inertia
and damping coefficients, and $k_{ij}$ is the susceptance of the power line from the $j$th generator to the $i$th one. Rewrite (\ref{sub_power}) as
\begin{equation}\label{sub_power2} \begin{array}{l}
\left[ {\begin{array}{*{20}{c}}
{{{\dot \theta }_i}}\\
{{{\ddot \theta }_i}}
\end{array}} \right] = {\left[ {\begin{array}{*{20}{c}}
0&1\\
0&{{{ - {d_i}}}/{{{m_i}}}}
\end{array}} \right]}\left[ {\begin{array}{*{20}{c}}
{{\theta _i}}\\
{{{\dot \theta }_i}}
\end{array}} \right] + {\left[ {\begin{array}{*{20}{c}}
0\\
1
\end{array}} \right]}\sum\limits_{j = 1}^N w_{ij} [1,0]\left[\!\! {\begin{array}{*{20}{c}}
{{\theta _j}}\\
{{{\dot \theta }_j}}
\end{array}}\!\! \right],\\ y_i=[1,0]\left[\!\! {\begin{array}{*{20}{c}}
{{\theta _j}}\\
{{{\dot \theta }_j}}
\end{array}}\!\! \right],
\end{array}\end{equation}where $w_{ij}=k_{ij}/{m_i}$ if $j\ne i$, and $w_{ii}=-\sum\nolimits_{j=1,j\ne i}^Nk_{ij}/{m_i}$, which can be seen as weight of the self-loop $(i,i)$. A typical power network topology is the IEEE-9 bus system shown in Fig. \ref{ieee9bus}, which consists of $9$ buses and whose link set is denoted by $\cal E$. In our analysis, each bus is simplified as a generator \cite{Kumar2012TransientSA}.

Consider the failure of one bus from the IEEE-9 bus system. For example, suppose that bus $1$ is removed from this power network, i.e., ${\cal E}_f=\{(1,4),(4,1),(1,1)\}$ (it should be noted that, the influence on the self-loops of other nodes from the removal of bus $1$ is neglected). It can be seen that, $r_{\max}=\infty$ for the dynamics (\ref{sub_power2}) whatever value $\frac{-d_i}{m_i}$ takes.  According to Theorem \ref{main_generic_dect}, deploying one sensor on an arbitrary bus can detect this failure.

Furthermore, suppose we have the prior knowledge that at most one bus is removed from the power network. Then, in this situation the failure set can be formulated as ${\mathbb E}=\{{\cal E}_{fi}|_{i=1}^9\}$, where ${\cal E}_{fi}\doteq \{(i,j):(i,j)\in {\cal E} \}\bigcup \{(j,i):(j,i)
\in {\cal E}\}$, i.e., ${\cal E}_{fi}$ collects all ingoing and outgoing links of node $i$. By the greedy algorithm described in Algorithm \ref{alg1}, a sensor placement solution is obtained as $\bar {\cal S}=\{4\}$ (in fact, deploying one sensor at an arbitrary bus is feasible for failure isolability). Letting ${-d_i}/{m_i}=-1$, $\forall i$, and $w_{ij}=1$ for any links except the self-loops, we collect in Fig. \ref{ieee9busResponse} the output responses of the corresponding systems after every single-node failure with a common random initial state $x_0$.  It validates that, indeed,  the set of every single-node failure is isolable by the proposed sensor deployment. 

{Finally, consider the scenario where each sensor is affected by a scalar white noise with zero mean and a standard deviation of $0.05$. Suppose that for two vectors $y(t)$ and $\bar y(t)$, if $||y(t)-\bar y(t)||_2\le E$ then $y(t)$ and $\bar y(t)$ cannot be distinguished by the sensors (and otherwise can), where $||\cdot ||_2$ takes the  $2$-norm, and $E$ is a prescribed threshold (for simplifying descriptions, $E$ does not vary with the number of sensors). Consider the removal of bus $1$, with three different sensor locations the first being ${\cal S}_1=\{4\}$, the second ${\cal S}_2=\{4,3\}$, and the third ${\cal S}_3=\{4,5,3\}$. The remaining system parameters are the same as those mentioned above. Fig. \ref{ieee9busResponseNoise} records the raw output deviations from the nominal measured one (without filtering) as well as decisions made by the sensors over the time axis. It can be seen that, though all sensor solutions can detect the failure, the solution with more sensors achieves a shorter observation time. Besides, the sensor resolution (the threshold $E$) also affects the timeliness of failure detection. These observations indicate that some further real factors may need taking into account in practical implementations apart from the generic detectability.}

\begin{figure}
  \centering
  \includegraphics[width=1.6in]{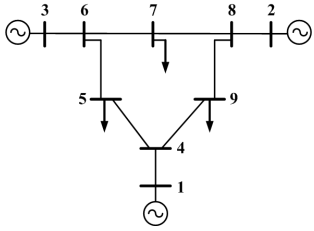}\\
  \caption{Sketch of the IEEE-9 bus system \cite{Kumar2012TransientSA}. Every link is bidirectional. }\label{ieee9bus}
\end{figure}

\begin{figure}
  \centering
  \includegraphics[width=3.0in]{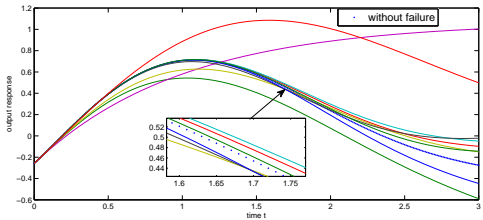}\\
  \caption{Output responses of the IEEE-9 bus power system after every
single-node failure with ${\cal S}=\{4\}$.}\label{ieee9busResponse}
\end{figure}

\begin{figure}
  \centering
  \includegraphics[width=3.3in]{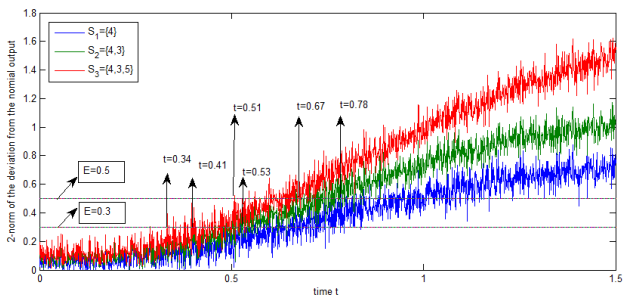}\\
  \caption{Deviations of outputs from the nominal one of the IEEE-9 bus power system after the
  removal of bus $1$ in the presence of measurement noise. The numbers around the arrows are the time that sensors first distinguish the faulty outputs from the nominal one with the threshold $E$.}\label{ieee9busResponseNoise}
\end{figure}

\section{Conclusions}
In this paper, we study generic detectability and isolability of topology failures for a networked linear system, where subsystem dynamics are given and identical, but the weights of interaction links among them are unknown. We give necessary and sufficient graph-theoretical conditions for generic detectability and isolability, as well as some characterizations of generically (not) isolable failure sets. These conditions reveal fundamental structural/topological limitations for the networked systems to support detectability and isolability of a given topology failure (set), which are irrespective of the exact detection and isolation algorithms adopted. These results are further used to deploy the smallest set of sensors to achieve generic detectability and isolability of a given failure (set).

{We summarize some future research directions here concerning on the practical limitations of our results. One is a more reasonable (possibly dynamical \cite{Buldyrev2009Catastrophic}) model of a topology failure which should distinguish its affections from those of (measurement or process) noise or small parameter perturbations. The second is a quantitive metric which can measure the associated detection/isolation performances/difficulties, and may take some real factors such as the measurement noise, the sensor resolutions, the initial states or the observation time into account. The final one is developing the exact detection and isolation algorithms. Since applying some existing observer-based approaches for the lumped systems requires accurate system parameters \cite{Chen1999RobustMF},  it is of great value to explore some data-driven approaches without system identification, which will be our future work.}

{\bibliographystyle{elsarticle-num}
{\footnotesize
\bibliography{yuanz3}
}}

\end{document}